\pgfplotsset{compat = newest}
\newtheorem{theorem}{Theorem}[section]
\newcommand{\pprob}{\mathbb{P}}
\newcommand{\Prob}[1]{\pprob\left(#1\right)}
\newcommand{\expec}{\mathbb{E}}
\newcommand{\Exp}[1]{\expec\left[#1\right]}
\newcommand{\Expp}[1]{\expec_\pprob\left[#1\right]}
\newcommand{\supp}{\textup{supp}}
\begin{document}

\title{Sharpest possible clustering bounds using robust\\ random graph analysis}

\author{Judith Brugman, Johan S.H. van Leeuwaarden and Clara Stegehuis}

\begin{abstract} 
Complex network theory crucially depends on the assumptions made about the degree distribution, while fitting degree distributions to network data is challenging, in particular for scale-free networks with power-law degrees. We present a robust assessment of complex networks that does not depend on the entire degree distribution, but only on its mean, range and dispersion: summary statistics that are easy to obtain for most real-world networks. By solving several semi-infinite linear programs, we obtain tight (the sharpest possible) bounds for correlation and clustering measures, for all networks with degree distributions that share the same summary statistics. We identify various \emph{extremal random graphs} that attain these tight bounds as the graphs with specific three-point degree distributions. We leverage the tight bounds to obtain robust laws that explain how degree-degree correlations and local clustering evolve as function of node degrees and network size. These robust laws indicate that power-law networks with diverging variance are among the most extreme networks in terms of correlation and clustering, building further theoretical foundation for widely reported scale-free network phenomena such as correlation and clustering decay. 
\end{abstract}

\maketitle

\section{Introduction}

Degree heterogeneity drives many complex network properties, with the spread of a virus over a network as a striking example. In homogeneous networks, when differences between node connectivity are relatively small, classical network theory says an epidemic can arise when the average number of secondary infections caused by a single infected individual, $R$, exceeds one. In scale-free networks with high degree fluctuations, on the other hand, this is not a good predictor, as individuals who are infected early on may be different from the average individual. Indeed, these individuals typically have more contacts so that an epidemic can develop even if $R$ is close to zero. A virus then spreads extremely quickly and can hardly be contained.
Many real-world networks, in fact, often have extremely heterogeneous degrees that can be approximated with power-laws, so that the proportion of nodes having $k$ neighbors scales as $k^{-\tau}$ with exponent $\tau$ between 2 and 3 \cite{jeong2000,vazquez2002,faloutsos1999}. Power-law degrees imply various intriguing scale-free network properties such as the absence of an epidemic threshold for $\tau<3$~\cite{janson2009b, pastor2001}, ultra-small distances~\cite{newman2001} and efficient embedding methods~\cite{blasius2018a}.  


Because of this degree heterogeneity, the analysis of such networks is complex.
Network properties such as the friendship paradox, and more generally the connections between nodes with vastly different degrees, are studied in network theory in the form of so-called degree-degree correlations and clustering. Degree-degree correlations measure correlation between the degrees of two connected nodes,  often captured in terms of $a(k)$, the average degree of a neighbor of a degree-$k$ node. By clustering we mean the creation of triangular connections (triadic closure), quantified in terms of $c(k)$, the probability that two neighbors of a degree-$k$ node are neighbors themselves. In uncorrelated networks the $a(k)$ and $c(k)$ are independent of $k$. However, the majority of real-world networks, and scale-free networks in particular, have $a(k)$ and $c(k)$ functions that decay in $k$, first observed in technological networks such as the Internet~\cite{pastor2001b,ravasz2003}. Figure~\ref{fig:youtube} shows the same fall-off for a social network: YouTube users as vertices, and edges indicating friendships between them~\cite{snap}.


When $a(k)$ decreases in $k$, the network is said to be disassortative, so that high-degree vertices typically connect to low-degree vertices. When $c(k)$ decreases in $k$, this may indicate the presence of hierarchy. A hierarchical topology arises, for example, when the rare high-degree nodes together form a backbone, and the low-degree nodes are located in clusters of low-degree nodes that are connected to one of the high-degree nodes. These core peripheries are found in complex networks created by both humans and nature~\cite{gallagher2021}. This view of a hierarchical network explains both the negative degree-degree correlations, because most low-degree nodes are connected to a single high-degree node, and the clustering fall-off, because the core periphery mainly consists of triadic closures between low-degree communities while high-degree nodes rarely participate in triangles and communities. 

Network features such as decaying degree correlations are broadly studied through random graphs, mathematically tractable models that can generate random samples of a graph in which nodes have i.i.d.~degrees~\cite{bianconi2005,itzkovitz2003subgraphs,hofstad2017b,stegehuis2019b,van2021optimal}. Random graph models take the degree distribution as input.
Conditional on the degree distribution, random graph properties such as average distance and clustering can be characterized and tested against measurements from real-world network data with the same degree distribution. 


Motivated by the wide range of examples of networks with heavy-tailed degrees, the power-law distribution has become a popular choice as an input degree distribution for random graph models. 
Fitting a power law to real-world data, however, is statistically challenging~\cite{clauset2009,broido2018,voitalov2019scale}. For small values, a power law is usually not a good fit. For this reason, lower bounds for the power-laws or additional slowly-varying functions are often introduced, but these form extra functions that need to be fitted as well.
Larger values of the power law also present challenges. Most real-world data sets only follow a power law up to some maximal degree, which is often modeled by an exponential cutoff~\cite{newman2001a,mossa2002,ebel2002}. Real-world networks are finite by definition, while a power law allows infinitely large values. 

An inherent disadvantage of network theory that rests on random graphs is the dependency on precise statistical assumptions about the degree distributions. Network theory should not be overly sensitive to the assumed degree distribution, especially when the assumption is hard to justify statistically. For power laws for instance, the tail exponent $\tau$ implies vastly different network properties. One reason for this is the variance of the degree distribution. When the number of nodes $n$ becomes large, the variance grows to infinity for $\tau<3$, while the variance remains finite for $\tau>3$. This difference in variance growth crucially influences the network structure and its degree-degree correlations~\cite{yao2017,stegehuis2017b}. 

 \begin{figure}[tbp]
 \centering
\subfloat[]
    {
        \centering
        \includegraphics[width=0.48\linewidth]{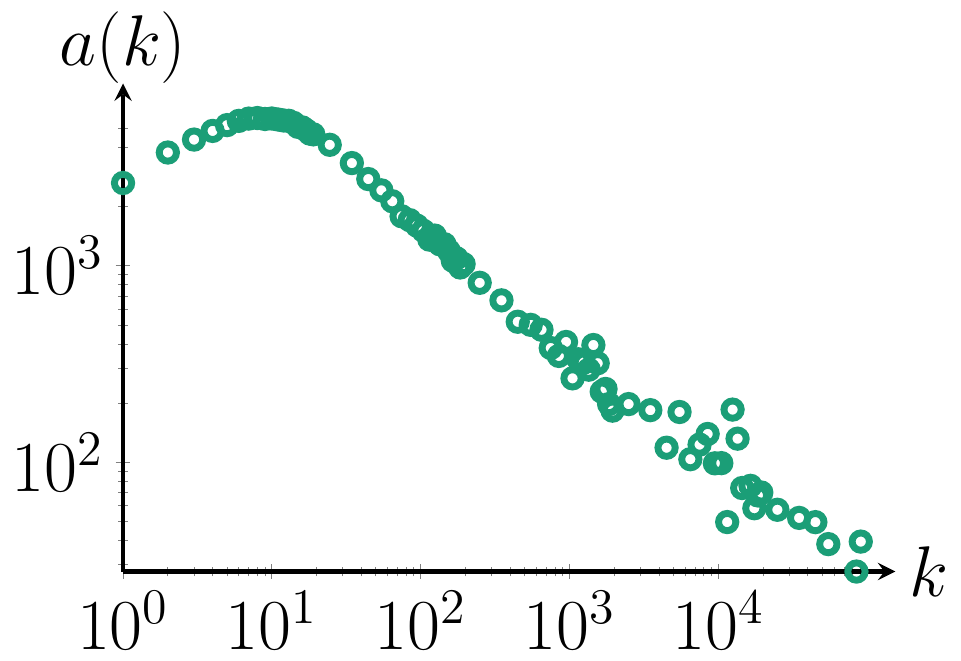}
		\label{fig:youtubeah}
}
\subfloat[]
    {
        \centering
		\includegraphics[width=0.48\linewidth]{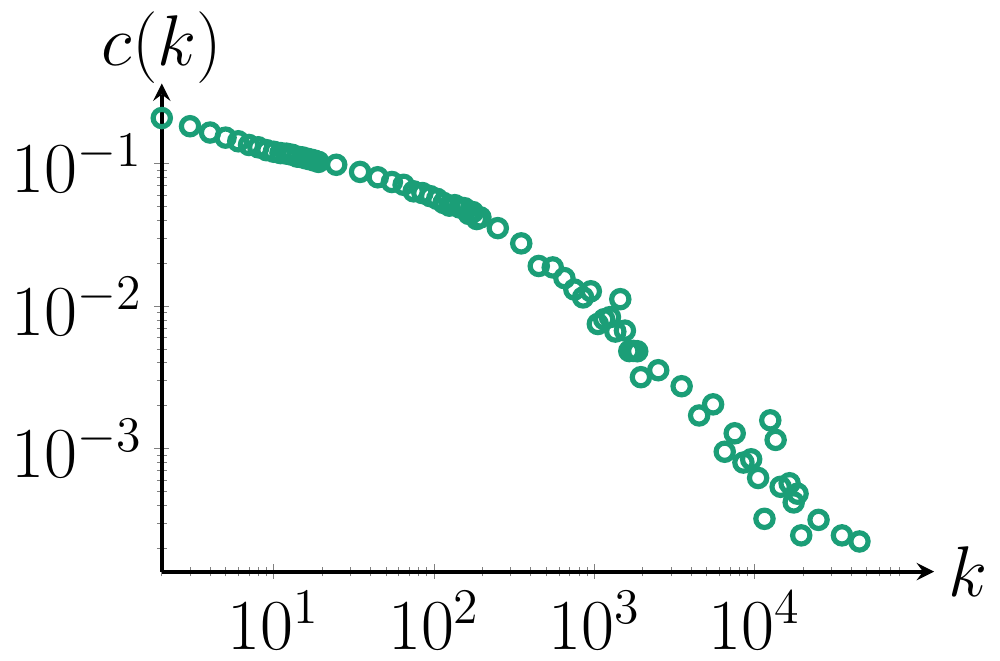}
		\label{fig:youtubech}
}
\caption{a) $a(k)$ and b) $c(k)$ for the YouTube social network}
		\label{fig:youtube}
\end{figure}

To overcome this sensitivity of network null models to precise statistical assumptions on the presence of power-laws or other specific degree distributions, we here characterize degree correlations and clustering in random graphs that only require partial information about the degree distribution. Inspired by the complicated assessment of power laws, we assume only the  mean, dispersion and cutoff of the degree distribution are known. Here we consider two measures of dispersion: the variance and the 
mean absolute deviation (MAD) of the degree distribution. The MAD is an alternative to variance for measuring dispersion around the mean, and may be more appropriate in case of heavy tails. Indeed, MAD can deal with distributions that do not possess a finite variance, in particular the class of power-law distribution with $\tau\in(2,3)$, for which MAD remains finite while variance becomes infinite in the large-network limit when $n\to\infty$.  

We will establish the maximal correlation and maximal clustering that can be achieved by all degree distributions that share the same mean, cutoff and dispersion.
By constructing and solving optimization problems, we find the extremal degree distributions that maximize the degree-degree correlation and clustering. These optimization problems take the form
\begin{equation}\label{genmax}
 \max_{\pprob\in\mathcal{P}}\Expp{{\rm graph \ property}},
\end{equation}
where $\mathcal{P}$ is the set that contains all degree distributions that comply with the limited information, such as the mean, cutoff and dispersion. Hence, within the set $\mathcal{P}$ we find the degree distribution $\pprob^*$ that maximizes the expected graph property. We will refer to the random graph with the extremal degree distribution that attains the maximum as the \emph{extremal random graph}.

We solve optimization problems as in \eqref{genmax}  for the hidden-variable model~\cite{chung2002,boguna2004}, a random graph model that generates graphs with degrees that approximately follow some given distribution. 
The optimization problems in this paper give rise to semi-infinite linear programs and can be solved using methods from distributionally robust optimization. Using a primal-dual approach we can solve these semi-infinite linear programs in closed form, and find the precise description of the degree distribution $\mathbb{P}^*$ that attains the largest expected graph property. Since this distribution is by definition contained in the set $\mathcal{P}$, the bound
\begin{equation}\label{genmaxbound}
\Expp{{\rm graph \ property}}\leq \mathbb{E}_{\mathbb{P}^*}[\rm graph \ property], \quad \forall \mathbb{P}\in \mathcal{P}. 
\end{equation}
is the best possible (tight) bound for all degree distributions that share the same summary statistics as is $\mathcal{P}$.

Distributionally robust optimization finds applications in many domains~\cite{BenTal1972,popescu2005semidefinite,eekelen2019}, but applications in the area of network science are rare. In fact, we are only aware of two papers that apply distributionally robust optimization to study complex networks. The first paper investigates the maximal possible subgraph counts under a restrictive cutoff scheme that creates uncorrelated networks~\cite{leeuwaarden2021}. The second paper provides a distributionally robust model for the influence maximization problem where the influence diffusion is adversarially adapted to the choice of seed set~\cite{chen2020}. Here the authors aim to detect a seed set whose worst-case expected influence is maximized, and show that this differs from the standard model in which influence is assumed to diffuse independently across the different edges. 

We term the largely unexplored approach taken in this paper {\it distributionally robust random graph analysis}, referring to the combination of classical random graph analysis and the optimization framework that only conditions on partial distributional information. This view on random graphs trades precise results that hold for a given degree distribution for robust statements that hold for classes of degree distributions. Such robust results fit well with the search for universal properties of complex networks.

Here are the main contributions of this paper: 
\begin{itemize}
    \item[(i)] For all degree distributions with a given mean, variance and cutoff, we obtain the maximal degree-degree correlations and local clustering. We show that these bounds for $a(k)$ and $c(k)$ decay in $k$, as observed in most real-world networks and random graph models.
    \item[(ii)] We show that the  maximal values of $a(k)$ and $c(k)$ are often attained by uncorrelated graphs. In particular,
    the sharpest possible bounds for $a(k)$ and $c(k)$ for all degree distributions with the same mean and variance are attained by uncorrelated degrees distributions, as long as it is possible to create such uncorrelated distributions with the given mean and variance.
    
    \item[(iii)] We compare the extremal graph models that provide the highest correlations and clustering to existing results for power-law random graphs. While power-laws are often thought of as degree distributions that lead to extreme behavior, the power-laws are not the degree distributions that possess the largest possible values of $a(k)$ and $c(k)$ when $\tau>2$.  
\item[(iv)] We provide a method to detect whether any given real-world data set can be modeled by hidden-variable models for properties of interest. We show that for several real-world data sets, no possible hidden-variable model can model the particular real-world data set.
\end{itemize}

We introduce the hidden-variable model and assumptions on the degree distribution in Section~\ref{sec:model}. We then solve the maximization problem that finds the extremal random graph that generates the maximal degree-degree correlation in Section~\ref{sec:degree}. The scaling laws for clustering as function of the network size are presented in Section~\ref{sec:ck}, and in Section~\ref{sec:N} we do this for clique counts. In Section~\ref{sec:mad} we obtain results for the setting when the MAD instead of the variance is used as dispersion measure. In Section~\ref{sec:powerlaw} we compare the robust bounds obtained in earlier sections
with existing results for scale-free networks with power-law degrees, and with data from real-world networks.

\section{Random graphs and hidden (random) variables}\label{sec:model}

Our analysis of degree-degree correlations and clustering will be based on the hidden-variable model, a random graph model in which every vertex $i\in[n]$ has a weight $h_i$,
and edges are formed between pairs of nodes with a probability that depends on both weights. More specifically, every pair of vertices is connected independently with probability
\begin{equation}\label{eq:pconpf}
	p(h_i,h_j)=\min\Big(\frac{h_ih_j}{h_s^2},1\Big)=\min\Big(\frac{h_ih_j}{{\mu n}},1\Big),
\end{equation} 
where $\mu$ denotes the average weight, and $h_s$ is the structural cutoff set to $\sqrt{\mu n}$ throughout this paper, in line with its typical choice for power-law networks~\cite{boguna2004,colomer2012,catanzaro2005,boguna2003}. This choice of cutoff ensures that the weight of a vertex is close to its degree~\cite{stegehuis2017,boguna2003}. The structural cutoff describes the maximal degree of vertices that are not prone to degree-degree correlations~\cite{boguna2004}. As soon as the degree of a vertex becomes larger than the structural cutoff, it is forced to connect to lower degree vertices, as only few high degree vertices can be present while keeping the average degree fixed.

The natural cutoff describes the constraint on the largest possible network degree, or the largest possible weight, $h_c$. 
In many real-world networks as well as in networks generated from power-law degrees, the largest observed degree is  much larger than the structural cutoff of $\sqrt{\mu n}$. For example, when the degrees of the vertices follow a power-law distribution with degree-exponent $\tau$, the largest degree scales as $n^{1/(\tau-1)}$. 
This means that the network contains vertices that are prone to degree-degree correlations and connection probabilities become non-convex, which makes the network analysis technically more challenging.

The hidden-variable model has several properties that make it amenable to analytical analysis. First of all, when the connection probabilities are chosen suitably, the weight $h$ and the degree $k$ of a vertex are similar with high probability. Indeed, the expected degree $d_j$ of vertex $j$ given its weight, satisfies
\begin{align}
    \Exp{d_j\mid h_j}&=\sum_{j\neq i}\min\Big(\frac{h_ih_j}{{\mu n}},1\Big)\nonumber\\
    & \approx \sum_{i\neq j}\frac{h_ih_j}{{\mu n}} \approx h_j.
\end{align}
To be more precise, when $h\gg 1$, then $k=h(1+o(1))$~\cite{hofstad2017b}. This makes it possible to interchange weights and degrees, which is convenient as the connection probabilities are defined in terms of weights. Secondly, when all hidden variables are assigned, most network statistics of interest can be computed as a function of the hidden variables. For example, the average degree of all neighbors of a vertex with weight $h$ can be written as
    \begin{equation}\label{eq:ak}
    a(h)=\frac{1}{h}\sum_{i=1}^nh_i\min\Big(\frac{h h_i}{\mu n},1\Big),
\end{equation}
where the sum is over all vertices in the network, and multiplies the weight of a vertex with the probability that vertex $i$ connects to the weight-$h$ vertex.

The local clustering coefficient denotes the probability that two randomly chosen neighbors of a vertex with weight $h$ are neighbors themselves. This statistic can again be written as a function of the hidden variables. Formally,
\begin{align}\label{eq:ck}
    &c(h)= \nonumber\\
    & \frac{1}{h^2}\sum_{1\leq i<j\leq n}\min\Big(\frac{h h_i}{\mu n},1\Big)\min\Big(\frac{h h_j}{\mu n},1\Big)\min\Big(\frac{h_i h_j}{\mu n},1\Big).
\end{align}
Here the sum is over all pairs of vertices in the network, and the term inside the summation computes the probability that these vertices form a triangle with the weight-$h$ vertex.

At first sight, degree correlations and clustering and seem unrelated, as the former is defined in terms of edges and the latter in terms of triplets of nodes. Still, intuitively $a(k)$ and $c(k)$ are related in the case of hidden-variable models. Indeed, the average neighbor of a vertex of a weight $k$ vertex is $a(k)$. The probability that two such vertices connect scales as $a(k)^2/n$, when $a(k)$ is sufficiently small. This probability can be interpreted as the probability that two `average' neighbors of a weight-$k$ vertex connect. It turns out that this intuitive reasoning provides the correct scaling of $c(k)$ in some cases. That is, $c(k)\sim a(k)^2/n$~\cite{stegehuis2017b}. 

When studying real-world data sets, we can only observe $\bar c(k)$ and $\bar a(k)$, the local clustering coefficient and average degree of the neigbors of a degree-$k$ vertex, rather than a weight-$h$ vertex. Still, the property of the hidden-variable model that degrees and weights are close makes the difference between these two statistics small in the large-network limit~\cite{hofstad2017b}.

In traditional hidden-variable models, the weights $h_1,\ldots,h_n$ are assumed independent and following some distribution $\mathbb{P}$. The natural cutoff can then be calculated from the distribution $\mathbb{P}$.
In this paper, however, we only specify partial information about the weight (i.e.~degree) distribution. We will assume that for the weights we know the minimal value, their maximal value (the natural cutoff $h_s$), the mean and the dispersion, first measured in variance and later in terms of mean absolute deviation (MAD). Let $h$ denote a generic weight. 
We first assume that the weights are 
sampled independently from a distribution such that (i) $h=h_i$  has support $\supp(h)=[a,h_c]$ with $-\infty<a\leq h_c<\infty$,  (ii) $\Exp{h}=\mu$ and (iii) $\Exp{(h-\mu )^2}=\sigma^2$. This defines the ambiguity set 
\begin{align}\label{eq:psupp}
	\mathcal{P}(\mu,\sigma^2)=& 
	\{\pprob:  \supp(h)\subseteq [a,h_c], \Exp{h}=\mu ,\nonumber\\
	& \quad \Exp{(h-\mu)^2}=\sigma^2\}.
\end{align}
Hence, when we now analyze the hidden-variable model under the assumption that the weight distribution belongs to $\mathcal{P}(\mu,\sigma^2)$, we perform a distributionally robust analysis of the random graph model.

The variance of the degree distribution is often highly affected by the choice of the natural cutoff. In power-law random graphs for example, the variance $\sigma^2$ grows as $h_c^{3-\tau}$. Indeed, for $\tau\in(2,3)$, the variance of the weights can be computed as
\begin{equation}\label{eq:plsigma}
    \int_{1}^{h_c}x^{2-\tau} dx-\mu^2 \sim h_c^{3-\tau}.
\end{equation}
The MAD on the other hand, always satisfies the inequality $d<2\mu$. Thus, as long as the average degree is finite, the MAD will not grow as a function of $h_c.$

In the rest of this paper, we will focus on the graph properties mentioned above, and first seek for the weight distribution $\pprob$ that solves
\begin{equation}\label{genmaxgen}
 \max_{\pprob\in\mathcal{P}(\mu,\sigma^2)}\Expp{\text{graph property}}
\end{equation}
with $\mathcal{P}(\mu,\sigma^2)$ as in~\eqref{eq:psupp}.
This means that we take a distributionally robust approach for the input weights of the hidden-variable model under the assumption that their distribution belongs to $\mathcal{P}(\mu,\sigma^2)$. 
When the graph property~\eqref{genmaxgen} can be viewed as a convex function of the generic weight random variable $h$,~\eqref{genmaxgen} is optimized for a specific distribution with support on three points~\cite{leeuwaarden2021}. Indeed, due to the convex nature of the function, an optimizer aims to put as much weights on the extremal points $a$ and $h_c$, while still adhering to the constraints on the average weight and its variance. This leads to a specific three-point distribution with probability mass on $a, h_c$ and $\mu$.  However, in the setting we now consider with natural cutoff $h_c>\sqrt{\mu n}$, the connection probability~\eqref{eq:pconpf} is not convex, and therefore most graph properties will also not be convex in the hidden variables. In the next sections, we therefore apply a primal-dual based approach to find the distributionally robust graph properties.


\section{Robust degree-degree correlations bounds}\label{sec:degree}
The definition of $a(h)$ in \eqref{eq:ak} assumes that the hidden variables are known. Instead, we now assume that all hidden variables  are drawn from some probability distribution $\pprob$, so that the expected value of $a(h)$ can be computed as
\begin{equation}\label{eq:ahexp}
    \Expp{a(h)}=\frac{n}{h}\Expp{h'\min\Big(\frac{h h'}{\mu n},1\Big)}.
\end{equation}
We then search for the weight distribution $\pprob$ that solves
\begin{equation}\label{genmax2}
 \max_{\pprob\in\mathcal{P}(\mu,\sigma^2)}\Expp{a(h)}
\end{equation}
with $\mathcal{P}(\mu,\sigma^2)$
as in \eqref{eq:psupp}. 
Hence, when we now analyze the hidden-variable model under the assumption that the weight distribution belongs to $\mathcal{P}(\mu,\sigma^2)$, we perform a robust analysis for all distributions with a given mean, variance and cutoff. The optimization problem \eqref{genmax2} can be written as 
\begin{equation}\label{eq:varoptdual}
\begin{aligned}
&\max_{\pprob(x)\geq0} &  &\int_x g(x){\rm d} \pprob(x)\\
&\text{s.t.} &      & \int_x x^2{\rm d}\pprob(x)=\mu^2+\sigma^2, \  \int_x x{\rm d}\pprob(x)=\mu, \\  & & & \int_x {\rm d}\pprob(x)=1,   
\end{aligned}
\end{equation}
where $g(x)=x\min(hx/(\mu n),1)$. 
In optimization theory, \eqref{eq:varoptdual} is called a semi-infinite linear optimization problem (LP).
The Richter-Rogosinski Theorem (see, e.g.,~\cite{rogosinski1958moments,shapiro2009lectures,han2015convex}) says there exists an extremal distribution for problem \eqref{eq:varoptdual} with at most three support points. While finding these points in closed form is typically not possible for general semi-infinite problems, we next show that this is possible for the problem at hand by resorting to the dual problem; see e.g. \cite{Isii1962} and \cite{popescu2005semidefinite}. This dual problem of \eqref{eq:varoptdual} is given by
\begin{equation}\label{eq:varopt}
\begin{aligned}
&\min_{\lambda_1,\lambda_2,\lambda_3} &  &\lambda_1(\mu^2+\sigma^2)+\lambda_2\mu+\lambda_3\\
&\text{s.t.} &      & g(x)-\lambda_1x^2-\lambda_2x-\lambda_3\leq 0 \quad \forall x\in[a,h_c],
\end{aligned}
\end{equation}
and aims to find a tightest quadratic majorant of $g(x)$ that minimizes $\lambda_1(\mu^2+\sigma^2)+\lambda_2\mu+\lambda_3$.
Now $g(x)$ has a quadratic part up to $\min(\mu n/h,h_c)$, and a linear part. Figure~\ref{fig:majorantak} shows that this function has two possible tightest quadratic majorants. The first, $F_1(x)$, is given by $\lambda_1=h/(\mu n), \lambda_2=\lambda_3=0$ and has objective value $h(\mu^2+\sigma^2)/(\mu n)$. The second one, $F_2(x)$, is given by $\lambda_2=1, \lambda_1=\lambda_3=0$, and has objective value $\mu$. Which of the two majorants has the smallest objective value depends on $h$. 
 For low values of $h$, the first majorant gives the lowest objective value, whereas for high values of $h$, the linear one dominates. The changing point is when
\begin{equation}
   h=\mu^2n/(\mu^2+\sigma^2).
\end{equation}

The next step is to find a feasible solution for the primal problem that yields the same objective value as the solution to the dual problem. By weak duality of semi-infinite linear programming, a feasible solution to the dual problem gives a valid upper bound for the optimal primal solution value. A feasible primal solution with an objective value equal to this upper bound would show that strong duality holds. Next, we provide a constructive approach, based on the condition of complementary slackness, to find such a primal solution. 

Assume that strong duality holds. The primal and dual objective are then equal for the primal maximizer $\mathbb{P}^*$ and the dual minimizer $(\lambda_1^*,\lambda_2^*,\lambda_3^*)$, and we can substitute the primal constraints in the dual objective. Hence, we obtain the relation
\begin{equation}\label{eq: compslack}
    \int_x g(x){\rm d} \pprob^*(x) = \int_x \lambda_1^* x^2+\lambda_2^*x+\lambda_3^*\,{\rm d}\mathbb{P}^*(x).
\end{equation}
Since $\lambda_1(\mu^2+\sigma^2)+\lambda_2\mu+\lambda_3 - g(x)\geq0$ pointwise by weak duality, \eqref{eq: compslack} implies that $\mathbb{P}^*$ is supported only on the points where $\lambda_1^* x^2+\lambda_2^*x+\lambda_3^*$ coincides with $g(x)$. 
We now show that in both cases, a three-point distribution achieves the dual objective value.

In the first case, $h<\mu^2n/(\mu^2+\sigma^2)$, and the dual objective value is given by $h(\mu^2+\sigma^2)/(\mu n)$.
Now consider the three-point distribution (for ease of notation, we assume that $a=0$, and denote $l=\min(\mu n/h,h_c)$) on the points $0,\mu,l$, so indeed the quadratic majorant and $g(x)$ coincide. The probabilities are chosen such that the moment conditions are satisfied. We obtain
\begin{equation}\label{eq:3ptcase1}
    p_0=\frac{\sigma^2}{l\mu},\quad p_{\mu}=1-\frac{\sigma^2}{l\mu}-\frac{\sigma^2}{l(l-\mu)},\quad p_l=\frac{\sigma^2}{l(l-\mu)}.
\end{equation}

\begin{figure}
    \centering
    \includegraphics[width=0.4\textwidth]{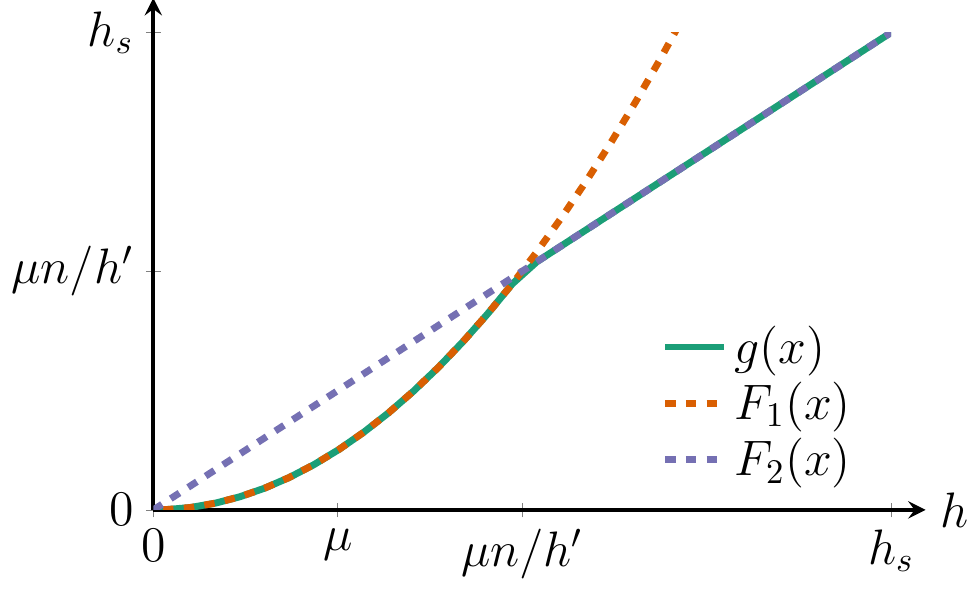}
    \caption{The two possible tightest majorants for the function $g(x)$}
    \label{fig:majorantak}
\end{figure}

This is only a proper probability distribution when $\sigma^2/(l\mu)+\sigma^2/(l(l-\mu))\leq1$, which can be rewritten as  $\sigma^2+\mu^2 \leq \mu l$. We first assume that $\min(\mu n/h,h_c)=\mu n/h$, so we need to check that $\sigma^2+\mu^2 \leq \mu^2n/h$. This follows directly from the assumption on $h$. When $\min(\mu n/h, h_c)=h_c$, we should satisfy $\sigma^2+\mu^2 \leq \mu h_c$. This is always the case, as the maximal variance of a primal solution with mean $\mu$ is given by the primal solution $1-p_0=p_{h_c}=\mu/h_c$, giving as variance $\sigma^2\leq \mu h_c-\mu^2$.

This three-point distribution gives as objective value for the primal problem
\begin{equation}
    \left(1-\frac{\sigma^2}{l\mu}-\frac{\sigma^2}{l(l-\mu)}\right)\frac{\mu^2h}{\mu n}+\frac{\sigma^2}{l(l-\mu)}\frac{l^2h}{\mu n}=\frac{(\sigma^2+\mu^2)h}{\mu n},
\end{equation}
which is equal to the dual objective value.

Thus, by duality, this three-point degree distribution generates the extremal random graph for $a(k)$,  and the result that for  $h<\mu^2n/(\mu^2+\sigma^2)$,
\begin{equation}
        \max_{\pprob\in\mathcal{P}^*(\mu,\sigma^2)}\Expp{a(h)}=\frac{\mu^2+\sigma^2}{\mu}.
    \end{equation}

In the second case, $h>\mu^2n/(\mu^2+\sigma^2)$, and the dual objective value is given by $\mu$. Here, consider the three-point distribution
\begin{align}\label{3pointA}
    p_0& =1-p_{\mu n/h}-p_{h_c},\quad  
    p_{\mu n/h}=\frac{h^2 \left(-h_c \mu +\mu ^2+\sigma ^2\right)}{\mu  n (\mu  n-h_c h)},
    \nonumber\\
    p_{h_c}&=\frac{\mu ^2 (h-n)+h \sigma ^2}{h_c (h_c h-\mu  n)}.
\end{align}
This is only a proper distribution when $\sigma^2<h_c\mu-\mu^2$, which always holds by a similar reasoning as for the first case. The second condition of $\sigma^2>h_c\mu-\mu^2-\mu n/h^2(h_ch-\mu n)$ is more involved, and is in fact not necessary. Indeed, in~\eqref{3pointA}, we chose the third point of the distribution at $h_c$. However, as the tightest majorant in Figure~\ref{fig:majorantak} touches on an entire line, it is also possible to choose the third point somewhere else in $[\mu n/h,h_c]$ while achieving the same optimal value. Choosing another point for the three-point distribution also results in different conditions on $\sigma^2$. The lowest such constraint is when $p_0=1-p_{\mu n/h}$ in~\eqref{3pointA}, yielding $\sigma^2=\mu^2n/h-\mu^2$, which is ensured by our assumption on $h$. 
Under this three-point distribution, $\mathbb{E}[X]=\mu$, $\mathbb{E}[(X-\mu)^2]=\sigma^2$ and the primal objective value $\mathbb{E}[g(X)]=\mu$. 

Thus, by duality, the three-point distribution is the variance-based extremal random graph for $a(k)$, giving that for $h>\mu^2n/(\mu^2+\sigma^2)$,
\begin{equation}
        \max_{\pprob\in\mathcal{P}(\mu,\sigma^2)}\Expp{a(h)}=\frac{\mu n}{h}.
    \end{equation}
 Notice that the three-point distribution \eqref{3pointA} is not a unique optimum, as the dual function $F_2(x)$ coincides with $g(x)$ on the entire interval $[\mu n/h,h_c]$. Therefore, one can construct an arbitrary (discrete, continuous or mixed) probability distribution with support on the interval $[\mu n/h,h_c]$, as long as the mean and variance conditions are satisfied. Similarly, the three-point distribution \eqref{eq:3ptcase1} is also not unique.

This yields the following theorem:
\begin{theorem}\label{thm:anndvariance}
    When $p(h,h')=\min(hh'/(\mu n),1)$, and $h_c\ll n$,
    \begin{equation}
        \max_{\pprob\in\mathcal{P}(\mu,\sigma^2)}\Expp{a(h)}=\begin{cases}\frac{\mu^2+\sigma^2}{\mu} & h<\mu^2n/(\mu^2+\sigma^2)\\
       \frac{\mu n}{h} &h>\mu^2n/(\mu^2+\sigma^2).
        \end{cases}
    \end{equation}
\end{theorem}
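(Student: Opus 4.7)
The plan is to follow the primal--dual recipe already sketched in the preceding discussion, carrying out the construction cleanly for both regimes of $h$. First I would rewrite $\Expp{a(h)}$ using \eqref{eq:ahexp}, so that the optimization reduces to maximizing $\int g(x)\,{\rm d}\pprob(x)$ with $g(x)=x\min(hx/(\mu n),1)$ subject to the moment constraints defining $\mathcal{P}(\mu,\sigma^2)$. This is the semi-infinite LP \eqref{eq:varoptdual}. By the Richter--Rogosinski theorem the optimum is attained by a distribution supported on at most three points, so it is enough to exhibit a three-point primal feasible candidate that matches a dual feasible bound.

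Next I would set up the dual \eqref{eq:varopt}, which asks for the tightest quadratic majorant $F(x)=\lambda_1 x^2+\lambda_2 x+\lambda_3$ of $g$ on $[a,h_c]$ minimizing $\lambda_1(\mu^2+\sigma^2)+\lambda_2\mu+\lambda_3$. Since $g$ is quadratic on $[0,\mu n/h]$ with leading coefficient $h/(\mu n)$ and linear with slope $1$ on $[\mu n/h,h_c]$, the only candidate majorants that are tight (touching $g$ along a whole segment) are $F_1(x)=hx^2/(\mu n)$ and $F_2(x)=x$, with objective values $h(\mu^2+\sigma^2)/(\mu n)$ and $\mu$ respectively. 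Comparing these two values gives the threshold $h^\star=\mu^2 n/(\mu^2+\sigma^2)$: $F_1$ is better for $h<h^\star$ and $F_2$ for $h>h^\star$.

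For $h<h^\star$ I would then propose the primal candidate supported on $\{0,\mu,l\}$ with $l=\min(\mu n/h,h_c)$, choosing the masses \eqref{eq:3ptcase1} so that the mean and variance match. A short check shows $p_0,p_\mu,p_l\ge 0$ precisely when $\mu^2+\sigma^2\le \mu l$, and both sub-cases $l=\mu n/h$ (from the assumption $h<h^\star$) and $l=h_c$ (from the universal bound $\sigma^2\le \mu h_c-\mu^2$, which itself follows by maximizing the variance over $\mathcal{P}(\mu,\sigma^2)$) yield feasibility. Substituting into the primal objective gives $(\mu^2+\sigma^2)h/(\mu n)$, matching the dual bound. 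For $h>h^\star$ I would take the three-point distribution \eqref{3pointA} on $\{0,\mu n/h,h_c\}$; the mass on $\mu n/h$ alone can already absorb the whole variance budget since $\mu^2 n/h-\mu^2>\sigma^2$ under our hypothesis, so by reallocating mass between $\mu n/h$ and $h_c$ (both inside the linear part where $F_2\equiv g$) one obtains a feasible distribution with objective value $\mu$.

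Finally I would close the argument by weak duality: each primal candidate realizes an upper bound given by the dual, hence both are optimal, yielding the stated two-case formula. The step I expect to require the most care is the feasibility bookkeeping for the three-point distributions, particularly near the boundary case $l=h_c$ in regime one and the nonuniqueness of the support point inside $[\mu n/h,h_c]$ in regime two; this is where complementary slackness has to be invoked correctly, because the majorant $F_2$ coincides with $g$ on an entire interval so the extremal distribution is not unique, and one must verify that at least one choice honors both moment constraints and the support restriction $\supp(h)\subseteq[a,h_c]$ simultaneously.
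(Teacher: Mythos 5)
Your proposal is essentially the paper's own argument: the same semi-infinite LP, the same two quadratic majorants $F_1(x)=hx^2/(\mu n)$ and $F_2(x)=x$ with crossover at $h^\star=\mu^2n/(\mu^2+\sigma^2)$, and the same three-point primal witnesses \eqref{eq:3ptcase1} and \eqref{3pointA} verified against the dual value by weak duality and complementary slackness. One local correction is needed in your regime-two feasibility check: under the hypothesis $h>\mu^2n/(\mu^2+\sigma^2)$ you have $\mu^2n/h-\mu^2<\sigma^2$, not $>\sigma^2$ as you wrote, so the two-point distribution on $\{0,\mu n/h\}$ carries \emph{too little} variance rather than absorbing the whole budget; the construction works precisely because $\sigma^2$ is then sandwiched between the minimal achievable variance $\mu^2n/h-\mu^2$ (all nonzero mass at $\mu n/h$) and the maximal one $\mu h_c-\mu^2$ (all nonzero mass at $h_c$), so interpolating the third support point within the contact interval $[\mu n/h,h_c]$ of $F_2$ hits the moment constraints without leaving the set where complementary slackness holds. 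With that sign fixed, the argument is complete and matches the paper.
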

The theorem distinguishes two regimes: one constant regime for low $h$, and a decaying regime for high enough $h$. In the constant regime, vertices are not prone to degree-degree correlations: all vertices have the same average nearest neighbor degree as long as $h<\mu^2n/(\mu^2+\sigma^2)$. Furthermore, higher degree variance implies a lower threshold, and hence more vertices that are subject to degree-degree correlations.
This is consistent with the intuition that 
degree-degree correlations arise because high-degree vertices are forced to connect to lower-degree vertices due to the lack of sufficiently many high-degree vertices. When $h_c<\mu^2n/(\mu^2+\sigma^2)$, it is possible to create entirely uncorrelated networks. This condition is more general than the often used $h_c\ll\sqrt{n}$ constraint for uncorrelated networks that was found for power-law networks~\cite{boguna2004}, as here we do not assume any specific weight distribution.

\section{Robust clustering bounds}\label{sec:ck}
We now consider the probability that two randomly chosen neighbors of a vertex of weight $h$ are connected to one another as well, $c(h)$. Again, in~\eqref{eq:ck} the hidden variables are assumed to be fixed. We assume that the hidden variables are drawn from some distribution $\pprob$, so that the expected value of $c(h)$ becomes
\begin{align}\label{eq:expck}
    &\Expp{c(h)}=\nonumber\\
    &\frac{n^2}{h^2}\Expp{\min\Big(\frac{h h_1}{\mu n},1\Big)\min\Big(\frac{h h_2}{\mu n},1\Big)\min\Big(\frac{h_1 h_2}{\mu n},1\Big)}
\end{align}
Now, instead of optimizing over only one hidden-variable as in~\eqref{eq:varoptdual}, we need to jointly optimize the weight distributions of both neighbors. Under variance-based optimization, the optimization problem for $c(h)$ corresponding to~\eqref{eq:varoptdual} becomes
\begin{equation}\label{eq:varoptdualch}
\begin{aligned}
&\max_{\pprob(x)\geq0} &  &\int_{x_1}\int_{x_2} g(x_1,x_2){\rm d} \pprob(x_2){\rm d} \pprob(x_1)\\
&\text{s.t.} &      & \int_x x^2{\rm d}\pprob(x)=\mu^2+\sigma^2, \int_x x{\rm d}\pprob(x)=\mu,\\
&  & & \int_x {\rm d}\pprob(x)=1,  
\end{aligned}
\end{equation}
where 
\begin{equation}
g(x_1,x_2)=\min\Big(\frac{x_1x_2}{\mu n},1\Big)\min\Big(\frac{x_1h}{\mu n},1\Big)\min\Big(\frac{x_2h}{\mu n},1\Big).\end{equation} 

It turns out that this optimization problem is difficult to solve, due to the constraint that the two variables $x_1$ and $x_2$ are i.i.d.. We therefore instead solve a relaxation of~\eqref{eq:varoptdualch}, where we allow these two variables to be correlated, or drawn from different distributions. This relaxed problem takes the form 
\begin{equation}\label{eq:varoptrelaxed}
\begin{aligned}
&\max_{\pprob(x_1,x_2)\geq0} &  &\int_{x_1}\int_{x_2} g(x_1,x_2){\rm d} \pprob(x_1,x_2)\\
&\text{s.t.} &      & \int_{x_1}\int_{x_2} x_1^2x_2^2{\rm d}\pprob(x_1,x_2)=(\mu^2+\sigma^2)^2,\\
& & & \int_{x_1}\int_{x_2} x_1x_2{\rm d}\pprob(x_1,x_2)=\mu^2,\\
&  & & \int_{x_1}\int_{x_2} {\rm d}\pprob(x_1,x_2)=1,   
\end{aligned}
\end{equation}
which again is a semi-infinite linear optimization problem. 
Here, instead of optimizing over a single distribution from which both weights are drawn, we optimize over a joint, symmetric distribution over the weights of the other two vertices involved in a triangle, $\pprob(x_1,x_2)$. As a consequence, it is possible that the optimal weight distributions found by solving the relaxed problem are correlated 
so that the two vertices jointly optimize their weights to make a triangle formation more likely. Interestingly, it turns out that this is not the case. In Appendix~\ref{sec:ckvarbasedproof}, we show that the optimizer of the relaxed optimization problem, that thus allows for correlations and is easier to solve, is in fact an uncorrelated distribution, and therefore also the solution of the original optimization problem~\eqref{eq:varoptdualch}. We are able to derive this optimizer, because the dual version of this problem
\begin{equation}\label{eq:ckdual}
\begin{aligned}
&\min_{\lambda_1,\lambda_2,\lambda_3} &  &\lambda_1(\mu^2+\sigma^2)^2+\lambda_2\mu^2+\lambda_3\\
&\text{s.t.} &      & g(x_1,x_2)-\lambda_1x_1^2x_2^2-\lambda_2x_1x_2-\lambda_3\leq 0 \nonumber\\
& & & \quad \forall x_1,x_2\in[a,h_c],
\end{aligned}
\end{equation}
is relatively easy to solve.
This gives the following theorem on the distributionally robust optimizer for $c(h)$:

\begin{theorem}\label{thm:ckvariance}
    When $p(h,h')=\min(hh'/(\mu n),1)$ and $\sigma^2<\mu \max(\sqrt{\mu n},\mu n/h)-\mu^2$,
    \begin{align}
        &\max_{\pprob\in\mathcal{P}(\mu,\sigma^2)}\Expp{c(h)}\nonumber\\
        & =\begin{cases}\min\Big(\frac{(\mu^2+\sigma^2)^2}{\mu^3n},1\Big) & h<\mu^2n/(\mu^2+\sigma^2)\\
       \frac{\mu n}{h^2} &h>\mu^2n/(\mu^2+\sigma^2).
        \end{cases}
    \end{align}
\end{theorem}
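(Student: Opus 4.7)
The plan is to mirror the primal-dual strategy used for Theorem~\ref{thm:anndvariance}, applied now to the bivariate relaxed semi-infinite LP~\eqref{eq:varoptrelaxed} and its dual~\eqref{eq:ckdual}. The first step is to identify a small family of feasible dual solutions obtained from pointwise upper bounds on $g(x_1,x_2)$: (i) bounding every factor $\min(\cdot,1)$ by its linear argument yields $g(x_1,x_2)\le h^2 x_1^2 x_2^2/(\mu^3 n^3)$, so $\lambda_1=h^2/(\mu^3 n^3)$ is dual-feasible with value $h^2(\mu^2+\sigma^2)^2/(\mu^3 n^3)$; (ii) bounding the two $h$-dependent factors by $1$ gives $g\le x_1 x_2/(\mu n)$ with value $\mu/n$; (iii) bounding the $x_1 x_2$-factor by $1$ and the other two linearly gives $g\le x_1 x_2 h^2/(\mu^2 n^2)$ with value $h^2/n^2$. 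Rescaling by the prefactor $n^2/h^2$ from~\eqref{eq:expck}, these become upper bounds $(\mu^2+\sigma^2)^2/(\mu^3 n)$, $\mu n/h^2$ and $1$ on $\Expp{c(h)}$, whose pairwise crossings lie at $h=\mu^2 n/(\mu^2+\sigma^2)$ and $h=\sqrt{\mu n}$. Under the blanket assumption $\sigma^2<\mu\max(\sqrt{\mu n},\mu n/h)-\mu^2$, the smallest of the three coincides precisely with the value stated in the theorem.

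The second step is to match each dual value with an explicit primal product distribution $\pprob=Q\otimes Q$, choosing the support of $Q$ to consist only of points where $g$ coincides with the active dual majorant (complementary slackness). For $h<\mu^2 n/(\mu^2+\sigma^2)$ and $\sigma^2\le \mu\sqrt{\mu n}-\mu^2$, take the three-point distribution on $\{0,\mu,l\}$ with $l=\min(\sqrt{\mu n},\mu n/h,h_c)$ and probabilities of the form~\eqref{eq:3ptcase1}; on every pair of support points all three $\min$-arguments remain $\le 1$, so $g$ reduces to its pure polynomial form and $\Expp{g}=h^2\Expp{x^2}^2/(\mu^3 n^3)=h^2(\mu^2+\sigma^2)^2/(\mu^3 n^3)$. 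For $h>\mu^2 n/(\mu^2+\sigma^2)$, which under the assumption forces $h>\sqrt{\mu n}$, use instead a three-point distribution on $\{0,\mu n/h,y\}$ with $y\in[\mu n/h,h]$ and masses as in~\eqref{3pointA}; on the nontrivial support $\min(x_i h/(\mu n),1)=1$ and $x_1 x_2\le \mu n$, so $g(x_1,x_2)=x_1 x_2/(\mu n)$ and $\Expp{g}=\mu^2/(\mu n)=\mu/n$. In the remaining case $(\mu^2+\sigma^2)^2/(\mu^3 n)>1$ with $h<\mu^2 n/(\mu^2+\sigma^2)$, the two-point distribution $Q$ on $\{0,(\mu^2+\sigma^2)/\mu\}$ with $p_l=\mu^2/(\mu^2+\sigma^2)$ places its atom beyond both $\sqrt{\mu n}$ and $\mu n/h$, forcing all three $\min$-factors to $1$ at the nonzero atom and yielding $\Expp{c(h)}=1$, matching the trivial bound.

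Weak duality then delivers equality between primal and dual in each regime, establishing the theorem for the relaxed problem~\eqref{eq:varoptrelaxed}; because every optimizer we construct is a product measure, it is also feasible and optimal for the original i.i.d.\ problem~\eqref{eq:varoptdualch}, so the relaxation is tight. The main anticipated obstacle is the simultaneous verification of complementary slackness across all (up to nine) pairs of support points in each regime, and the careful check that the three-point masses are non-negative and sum to one under the stated condition $\sigma^2<\mu\max(\sqrt{\mu n},\mu n/h)-\mu^2$: this condition reflects two competing thresholds, $\sqrt{\mu n}$ arising from the cross-factor requirement $x_1 x_2\le \mu n$ and $\mu n/h$ from the single-weight requirements $x_i h\le \mu n$, rather than the single threshold $\mu n/h$ that sufficed for $a(h)$.
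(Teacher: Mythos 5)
Your strategy is the same as the paper's: relax the i.i.d.\ constraint to a joint measure, exhibit the dual majorants $\lambda_1=h^2/(\mu n)^3$ and $\lambda_2=1/(\mu n)$, and match them with product-form three-point primal distributions supported on contact points, so that the relaxation is tight. Two steps as written, however, contain errors. First, in the regime $h>\mu^2 n/(\mu^2+\sigma^2)$ you allow the third support point $y$ to range over $[\mu n/h,h]$; but for any $y>\sqrt{\mu n}$ the pair $(y,y)$ has $g(y,y)=1$ strictly below the majorant $y^2/(\mu n)$, so complementary slackness fails and the primal value falls strictly short of $\mu/n$. The admissible range is $y\in[\mu n/h,\sqrt{\mu n}]$, and to accommodate all $\sigma^2$ up to the stated bound $\mu\sqrt{\mu n}-\mu^2$ you must take $y=\sqrt{\mu n}$, which is exactly the paper's choice in~\eqref{eq:primal-dist}; the masses of~\eqref{3pointA} must also be recomputed for the new third support point. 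Second, in the capped sub-case your two-point distribution on $\{0,L\}$ with $L=(\mu^2+\sigma^2)/\mu$ does \emph{not} force all three min-factors to $1$ at $(L,L)$: since $h<\mu^2 n/(\mu^2+\sigma^2)$ you have $L<\mu n/h$, so $\min(Lh/(\mu n),1)=Lh/(\mu n)<1$. Had all three factors been $1$ you would obtain $\Expp{c(h)}=n^2\mu^4/(h^2(\mu^2+\sigma^2)^2)>1$, a contradiction. The construction still works, but via $g(L,L)=(Lh/(\mu n))^2$ together with $p_L^2L^2=\mu^2$, giving $\Expp{g}=h^2/n^2$ and hence $\Expp{c(h)}=1$; this is complementary slackness with your third majorant $x_1x_2h^2/(\mu n)^2$, not with the bound by the constant $1$. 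With these two repairs the argument coincides with the paper's proof, which handles the cap by the trivial bound $c(h)\le 1$ and a three-point distribution on $\{0,\sqrt{\mu n},\mu n/h\}$.
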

This theorem only holds when $\sigma^2$ is not too large. We conjecture that when $\sigma^2$ is larger than the range prescribed by the theorem, a primal-dual gap of~\eqref{eq:varoptrelaxed} appears, indicating that the optimization problem cannot be solved anymore through primal-dual methods. Indeed, for larger $\sigma^2$, the best dual solution remains feasible as it does not depend on $\sigma^2$. However, it is then impossible to construct a probability distribution with the required variance on the set of values where the dual constraints are tight. This suggests that a primal-dual gap is present in that case, as there is no primal feasible solution that satisfies complementary slackness.
For large $\sigma^2$, this implies that the primal problem has to be solved without the help of the dual problem, which makes the problem significantly more challenging. 

\section{Robust clique counts}\label{sec:N}

Whereas $a(k)$ and $c(k)$ 
measure two- and three-point correlations between nodes, we demonstrate in this section that robust bounds can also be obtained for network statistics that include more than three nodes. We focus on the number of cliques of size $k$, denoted as  $N(K_k)$, and use that
 \begin{align}
     \Exp{N(K_k)}& =\sum_{1\leq i_1<i_2\dots <i_k}\prod_{u<v}p(h_{i_u},h_{i_v})\nonumber\\
     & =\sum_{1\leq i_1<i_2\dots <i_k}\prod_{u<v}\min\Big(\frac{h_{i_u}h_{i_v}}{\mu n},1\Big),
 \end{align}
 as a clique is present if and only if all possible edges between nodes $i_1,i_2\dots ,i_k$ are present.
To establish the variance-based bound on the expected number of cliques, 
we formulate the multivariate optimization problem
\begin{equation}\label{eq:varoptdualclique}
\begin{aligned}
&\max_{\pprob(x)\geq0} &  &\int_{x_1}\dots\int_{x_k} g(x_1,x_2,\dots,x_k){\rm d} \pprob(x_1)\dots{\rm d} \pprob(x_k)\\
&\text{s.t.} &      & \int_x x^2{\rm d}\pprob(x)=\mu^2+\sigma^2, \int_x x{\rm d}\pprob(x)=\mu,\\
&  & & \int_x {\rm d}\pprob(x)=1  
\end{aligned}
\end{equation}
with 
$$g(x_1,\dots,x_k)=\prod_{1\leq i<j\leq k}\min\Big(\frac{x_ix_j}{\mu n},1\Big).$$
As for clustering, this optimization problem appears intractable due to the i.i.d.~hidden variables. 
We therefore again solve a relaxation of~\eqref{eq:varoptdualclique} instead that drops the i.i.d.~assumption. 
In Appendix~\ref{sec:ckvarbasedproof} we solve this relaxed optimization problem and prove the following robust bounds for clique counts:
\begin{theorem}\label{thm:Kkvar}
When $\sigma^2\leq \mu(\sqrt{\mu n}-\mu)$, $k> 3$, and as $n\to\infty$,
\begin{equation}\label{eq:Kkoptvarnocutoff}
    \max_{\pprob\in\mathcal{P}(\mu,\sigma^2)}\Expp{N(K_k)}
    =\frac{(\mu^2+\sigma^2)^k}{\mu^{k}k!}(1+o(1)).
\end{equation}
Furthermore, for $k=3$, $\sigma^2\leq \mu(\sqrt{\mu n}-\mu)$ and for any $n$,
\begin{equation}\label{eq:Kkoptvarnocutoff3}
    \max_{\pprob\in\mathcal{P}(\mu,\sigma^2)}\Expp{N(K_k)}
    =\frac{(\mu^2+\sigma^2)^k}{\mu^{k}k!}.
\end{equation}
\end{theorem}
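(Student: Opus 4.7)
The plan is to generalize the primal--dual argument of Theorem~\ref{thm:ckvariance} from triangles to $k$-cliques. As a first step I would pass to the relaxation of~\eqref{eq:varoptdualclique} that replaces the i.i.d.\ weights $x_1,\dots,x_k$ by a single joint distribution $\pprob(x_1,\dots,x_k)$ constrained only by the symmetric moment conditions $\int\prod_i x_i^2\,{\rm d}\pprob=(\mu^2+\sigma^2)^k$, $\int\prod_i x_i\,{\rm d}\pprob=\mu^k$, and $\int{\rm d}\pprob=1$. The associated Lagrange dual is
\begin{equation*}
\min_{\lambda_1,\lambda_2,\lambda_3}\;\lambda_1(\mu^2+\sigma^2)^k+\lambda_2\mu^k+\lambda_3
\end{equation*}
subject to the pointwise majorant constraint
\begin{equation*}
\prod_{i<j}\min\!\Big(\tfrac{x_ix_j}{\mu n},1\Big)\leq \lambda_1\prod_i x_i^2+\lambda_2\prod_i x_i+\lambda_3\quad\forall\, x\in[a,h_c]^k.
\end{equation*}

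The key analytic step is the elementary inequality $\min(y,1)\leq y^{2/(k-1)}$, valid for every $y\geq 0$ whenever the exponent lies in $(0,1]$, i.e.\ $k\geq 3$. Applying it factor by factor with $y=x_ix_j/(\mu n)$ and using that each coordinate $x_i$ appears in exactly $k-1$ of the $\binom{k}{2}$ pairs yields
\begin{equation*}
\prod_{i<j}\min\!\Big(\tfrac{x_ix_j}{\mu n},1\Big)\leq\prod_{i<j}\!\Big(\tfrac{x_ix_j}{\mu n}\Big)^{\!2/(k-1)}=(\mu n)^{-k}\prod_i x_i^2,
\end{equation*}
since $\binom{k}{2}\cdot 2/(k-1)=k$ and $(k-1)\cdot 2/(k-1)=2$. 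Hence $(\lambda_1,\lambda_2,\lambda_3)=((\mu n)^{-k},0,0)$ is dual feasible with value $(\mu n)^{-k}(\mu^2+\sigma^2)^k$. Weak duality and $\Expp{N(K_k)}=\binom{n}{k}\Expp{\prod_{i<j}\min(x_ix_j/(\mu n),1)}$ then give
\begin{equation*}
\Expp{N(K_k)}\leq \binom{n}{k}\frac{(\mu^2+\sigma^2)^k}{(\mu n)^k}=\frac{(\mu^2+\sigma^2)^k}{\mu^k\,k!}(1+o(1)),
\end{equation*}
via $\binom{n}{k}/n^k=(1+o(1))/k!$. For $k=3$ the pointwise step collapses to the trivial $\min(y,1)\leq y$, producing the tight identity $\binom{n}{3}(\mu^2+\sigma^2)^3/(\mu n)^3$ for~\eqref{eq:Kkoptvarnocutoff3} without an error term.

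For the matching primal I would apply complementary slackness: equality in $\min(y,1)\leq y^{2/(k-1)}$ for $k>3$ is attained only at $y\in\{0,1\}$, so the relaxed optimizer concentrates on tuples in $\{0,\sqrt{\mu n}\}^k$. The hypothesis $\sigma^2\leq\mu(\sqrt{\mu n}-\mu)$ is precisely the condition that an i.i.d.\ distribution on $\{0,\sqrt{\mu n}\}$ with mean $\mu$ and variance $\sigma^2$ exists; taking the three-point distribution on $\{0,\mu,\sqrt{\mu n}\}$ that interpolates to this extreme, the dominant contribution to $\Expp{N(K_k)}$ comes from the event ``all $k$ sampled vertices carry weight $\sqrt{\mu n}$'', which contributes $\binom{n}{k}p_s^k\sim (\mu^2+\sigma^2)^k/(\mu^k k!)$ since $p_s\sim (\mu^2+\sigma^2)/(\mu n)$; configurations with partial concentration contribute at order $n^{k(3-k)/2}$ and vanish for $k>3$. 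For $k=3$ the i.i.d.\ two-point distribution on $\{0,(\mu^2+\sigma^2)/\mu\}$ already used in Section~\ref{sec:degree} saturates the dual bound directly.

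The main obstacle will be showing that the joint relaxation is tight, so that the upper bound transfers back to the i.i.d.\ formulation~\eqref{eq:varoptdualclique}, and that the proposed construction attains the dual value rather than merely bounding it. I would follow the route of Appendix~\ref{sec:ckvarbasedproof} for clustering: on the slack-saturating support $\{0,\sqrt{\mu n}\}^k$ the integrand factorizes cleanly (each pair contributes $1$ or $0$), so any symmetric joint optimizer can be replaced by a product distribution without decreasing the objective, and the feasibility range $\sigma^2\leq\mu(\sqrt{\mu n}-\mu)$ keeps this product inside $\mathcal{P}(\mu,\sigma^2)$. Controlling the lower-order configurations uniformly over the admissible variance range is what produces the $(1+o(1))$ correction for $k>3$ and reduces to an exact identity for $k=3$.
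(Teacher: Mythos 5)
Your dual (upper-bound) half is correct and in fact tightens the paper's own exposition: the paper merely asserts that $\lambda_1=(\mu n)^{-k}$ is dual feasible, whereas your pairwise inequality $\min(y,1)\leq y^{2/(k-1)}$ together with the exponent bookkeeping $\binom{k}{2}\cdot\tfrac{2}{k-1}=k$ and $(k-1)\cdot\tfrac{2}{k-1}=2$ verifies it explicitly. The $k=3$ primal is also fine: any member of $\mathcal{P}(\mu,\sigma^2)$ supported in $[0,\sqrt{\mu n}]$ has $\Exp{g}=\Exp{X^2}^3/(\mu n)^3=(\mu^2+\sigma^2)^3/(\mu n)^3$, and your two-point distribution on $\{0,(\mu^2+\sigma^2)/\mu\}$ lies in that set precisely when $\sigma^2\leq\mu(\sqrt{\mu n}-\mu)$. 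Note also that you do not actually need the relaxation to be tight: the dual value upper-bounds the relaxed problem, which upper-bounds the i.i.d.\ problem, so exhibiting an i.i.d.\ primal that asymptotically attains the dual value closes the argument.

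The genuine gap is in that primal construction for $k>3$. For the three-point distribution supported on $\{0,\mu,\sqrt{\mu n}\}$ with mean $\mu$ and second moment $\mu^2+\sigma^2$, the masses are uniquely determined and give $p_{\sqrt{\mu n}}=\sigma^2/\bigl(\sqrt{\mu n}(\sqrt{\mu n}-\mu)\bigr)\sim\sigma^2/(\mu n)$, \emph{not} $(\mu^2+\sigma^2)/(\mu n)$ as you claim: the atom at $\mu$ already consumes $\approx\mu^2$ of the second-moment budget. Hence the dominant all-$\sqrt{\mu n}$ configuration contributes $\binom{n}{k}p_{\sqrt{\mu n}}^k\sim\sigma^{2k}/(\mu^k k!)$, which misses the dual value by the constant factor $\bigl(\sigma^2/(\mu^2+\sigma^2)\bigr)^k$ whenever $\sigma^2$ does not diverge with $n$, so the duality gap is not closed as stated. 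The paper's proof differs from yours exactly here: it places the middle support point at $m=\mu^{(k+1)/4}n^{(3-k)/4}\to 0$ for $k>3$ (degenerating to $m=\mu$ only when $k=3$), so that the middle atom repairs the mean constraint while contributing negligibly to both the second moment and $\Exp{X^{k-1}}$; then $p_{\sqrt{\mu n}}=(\mu^2+\sigma^2-m\mu)/\bigl(\sqrt{\mu n}(\sqrt{\mu n}-m)\bigr)\sim(\mu^2+\sigma^2)/(\mu n)$ and $\Exp{X^{k-1}}^k/(\mu n)^{k(k-1)/2}=(\mu^2+\sigma^2)^k(\mu n)^{-k}(1+o(1))$ as required. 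Your complementary-slackness observation correctly explains why no exact optimizer exists for $\sigma^2<\mu(\sqrt{\mu n}-\mu)$ and why the result is only asymptotic, but the third support point must be sent to $0$, not kept at $\mu$, for the asymptotics to match.
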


This theorem shows that cliques significantly increase when the variance grows, as is the case for heavy-tailed weight distributions. 
The theorem only holds asymptotically (except for $k=3$), as we create primal and dual solutions with a small gap between their respective optimal values that vanishes as $n\to\infty$. Whereas Theorems~\ref{thm:anndvariance}and~\ref{thm:ckvariance} gave exact (non-asymptotic) results, here the relaxed optimization method that provided exact results for Theorem~\ref{thm:ckvariance} gives non i.i.d.~weight distributions. Thus, this method does not provide exact bounds on clique counts. Instead, we first solve the dual problem, and then construct i.i.d.~primal weight distributions that asymptotically achieve the dual value, and are therefore asymptotically optimal.

As in Theorem~\ref{thm:ckvariance}, the theorem contains a condition on $\sigma^2$. We conjecture that for larger $\sigma^2$, a larger primal-dual gap is present that does not vanish in the large-network limit, so that the optimization problem~\eqref{eq:varoptdualclique} cannot be solved through its dual variant, similarly as for Theorem~\ref{thm:ckvariance}.

\section{MAD instead of variance}\label{sec:mad}


We now turn to a second measure of dispersion: mean absolute deviation. While the variance of a random variable can be infinite, the MAD is always bounded by $2\mu$, so that even in networks with heavy-tailed degree distributions this quantity remains finite. For maximizing based on MAD, the ambiguity set now becomes
\begin{align}\label{eq:psuppmad}
	&\mathcal{P}(\mu,d)=\nonumber\\
	&	\{\pprob:  \supp(h)\subseteq [a,h_c], \Exp{h}=\mu ,\Exp{|h-\mu|}=d\}.
\end{align}
As for the variance-based approach, we then aim to find the probability distribution $\pprob\in\mathcal{P}(\mu,d)$ that maximizes the network statistics $a(h)$ and $c(h)$. We can use the same approach of constructing an optimization problem based on the constraints formed by the ambiguity set and finding the optimal primal-dual solution. As shown in Appendix \ref{sec:proofsMAD}, we obtain the following result for $a(h)$:

\begin{theorem}\label{thm:annd}
    When $p(h,h')=\min(hh'/(\mu n),1)$, and $h_c\to\infty$ as $n\to\infty$ and $h_c\ll n$,
    \begin{equation}
        \max_{\pprob\in\mathcal{P}(\mu,d)}\Expp{a(h)}=\frac{d}{2\mu}\min\Big(h_c,\frac{\mu n}{h}\Big)(1+o(1)).
    \end{equation}
\end{theorem}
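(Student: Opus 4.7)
The plan is to adapt the primal--dual semi-infinite LP strategy of Theorem~\ref{thm:anndvariance}, simply replacing the variance constraint by the MAD constraint $\Exp{|h-\mu|}=d$. The primal problem is to maximize $\Expp{g(h')}$ with $g(x)=x\min(hx/(\mu n),1)$ over $\pprob\in\mathcal{P}(\mu,d)$, and its dual is
\begin{equation*}
\min_{\lambda_1,\lambda_2,\lambda_3}\ \lambda_1 d+\lambda_2\mu+\lambda_3\ \ \text{s.t.}\ \ g(x)\leq \lambda_1|x-\mu|+\lambda_2 x+\lambda_3\ \ \forall x\in[a,h_c].
\end{equation*}
Compared with Theorem~\ref{thm:anndvariance}, the tightest majorant is now piecewise linear with a single kink at $x=\mu$, rather than quadratic, so I would look for a primal three-point distribution with support $\{a,\mu,h_c\}$ and a dual majorant $L(x)$ that touches $g$ at these same three points.

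Taking $a=0$ for concreteness, the two moment conditions $\Exp{h}=\mu$ and $\Exp{|h-\mu|}=d$ together with $p_0+p_\mu+p_{h_c}=1$ pin down the probabilities as $p_0=d/(2\mu)$, $p_{h_c}=d/(2(h_c-\mu))$, and $p_\mu=1-p_0-p_{h_c}$, which is positive for $h_c$ large enough since $d<2\mu$. Using $g(0)=0$, $g(\mu)=h\mu/n$ (valid because $h\leq h_c\ll n$), and $g(h_c)=\min(hh_c^2/(\mu n),h_c)$, the primal value becomes
\begin{equation*}
\Expp{g(h')}=p_\mu\frac{h\mu}{n}+p_{h_c}\min\!\Big(\frac{hh_c^2}{\mu n},h_c\Big).
\end{equation*}
Multiplying by $n/h$ gives $\Expp{a(h)}=\mu\,p_\mu+(d/(2\mu))\min(h_c,\mu n/h)(1+o(1))$, and since $h_c\to\infty$ the second term dominates.

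For the dual, I would solve the three linear equations $L(0)=0$, $L(\mu)=g(\mu)$, $L(h_c)=g(h_c)$ for $(\lambda_1,\lambda_2,\lambda_3)$ in closed form. Feasibility $L\geq g$ on $[0,h_c]$ reduces to showing that each linear piece of $L$ lies above $g$ on $[0,\mu]$ and on $[\mu,h_c]$; this follows because $g$ is convex on $[0,h_c]$ (quadratic-then-linear) and $L$ is the chord on each of these subintervals. A direct computation then shows that the dual objective matches the primal value up to the same $\mu\cdot p_\mu=O(1)$ correction, which is absorbed into the $(1+o(1))$ factor.

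The main obstacle is the feasibility check for the dual majorant, since the slope of $g$ changes at $\mu n/h$ and this point may lie inside either $[0,\mu]$ or $[\mu,h_c]$ depending on whether $h_c\lessgtr\mu n/h$; the two regimes in the $\min$ on the right-hand side of the theorem correspond exactly to these cases, and in both the convexity argument above goes through unchanged. A secondary bookkeeping point is verifying that $p_\mu>0$ and that the dual minimum does not benefit from an alternative three-point support; both follow from $d<2\mu$ and from the fact that any lower dual majorant at $h_c$ would violate $L(h_c)\geq g(h_c)$. Once primal and dual values agree asymptotically, the theorem follows by the same weak-duality argument used in Section~\ref{sec:degree}.
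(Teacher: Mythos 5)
Your overall primal--dual strategy is the one the paper uses, but there is a genuine gap in the dual (upper-bound) half, rooted in the claim that $g(x)=x\min(hx/(\mu n),1)$ is convex on $[0,h_c]$. It is not: $g$ equals $hx^2/(\mu n)$ for $x\leq \mu n/h$ and equals $x$ for $x\geq \mu n/h$, so its derivative \emph{drops} from $2$ to $1$ at the kink $x=\mu n/h$. A quadratic-then-linear function with a decreasing slope at the junction is not convex, and consequently, in the regime $\mu n/h<h_c$, the chord from $(\mu,g(\mu))$ to $(h_c,g(h_c))$ lies strictly \emph{below} $g$ at $x=\mu n/h$: a short computation shows the chord dominates $g$ there only if $\mu n/h\geq h_c$. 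So your proposed piecewise-linear majorant through $\{0,\mu,h_c\}$ violates the dual constraint $L\geq g$ precisely in one of the two regimes you need, and weak duality gives you no upper bound there. (Your primal three-point distribution on $\{0,\mu,h_c\}$ does happen to achieve the stated value asymptotically in both regimes, so the lower bound survives; it is only the matching upper bound that is unproven.)

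The fix is the one the paper adopts: place the third touching point at $l=\min(\mu n/h,\,h_c)$ rather than at $h_c$, i.e.\ use the majorant
\begin{equation*}
F_1(x)=\frac{hl}{2\mu n}\,|x-\mu|+\Big(\frac{hl}{2\mu n}+\frac{h}{n}\Big)x-\frac{hl}{2n},
\end{equation*}
which touches $g$ at $0$, $\mu$ and $l$, and whose right-hand linear piece has slope $\tfrac{hl}{\mu n}+\tfrac{h}{n}\geq 1$ beyond $l$ when $l=\mu n/h$, so it stays above the slope-$1$ tail of $g$ and is feasible in both regimes. The corresponding primal distribution then also puts its third atom at $l$, with $p_l=d/(2(l-\mu))$, and primal and dual objectives agree exactly at $\mu h/n+hdl/(2\mu n)$, giving $\Expp{a(h)}=\mu+\tfrac{d}{2\mu}l$ before taking $l\to\infty$. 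Your secondary bookkeeping points ($p_\mu>0$ from $d<2\mu$, and absorption of the $\mu$ term into $(1+o(1))$) are fine once the support point is corrected.
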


This optimal value of $a(h)$ is attained by the 3-point distribution
\begin{equation}\label{eq:3pointah}
	p_0=\frac{d}{2\mu},\quad p_\mu=1-\frac{d}{2\mu}-\frac{d}{2(l-\mu)},\quad p_{l}=\frac{d}{2(l-\mu)},
\end{equation}
where $l=\min(\mu n/h,h_c)$.

To obtain results for $c(k)$ with MAD as well, we need to solve the optimization problem
\begin{equation}\label{eq:varoptchmad}
\begin{aligned}
&\max_{\pprob(x)\geq0} &  &\int_{x_1}\int_{x_2} g(x_1,x_2){\rm d} \pprob(x_2){\rm d} \pprob(x_1)\\
&\text{s.t.} &      & \int_x |x-\mu|{\rm d}\pprob(x)=d, \int_x x{\rm d}\pprob(x)=\mu,\\
&  & & \int_x {\rm d}\pprob(x)=1,   
\end{aligned}
\end{equation}
similarly to~\eqref{eq:varoptdualch}. Again, this optimization problem is difficult to solve, due to the constraint that the two variables $x_1$ and $x_2$ are i.i.d.. We could solve this problem by proceeding as in Section~\ref{sec:ck}, by writing an unconstrained version of this optimization problem, where we allow the two variables to be non-identical or correlated. However, these techniques then lead to the dual problem
\begin{equation}
\begin{aligned}
&\min_{\lambda_1,\lambda_2,\lambda_3}   \lambda_1d^2+\lambda_2\mu^2+\lambda_3\\
&\text{s.t.}      \    g(x_1,x_2)-\lambda_1|x_1-\mu||x_2-\mu|-\lambda_2x_1x_2-\lambda_3\leq 0\nonumber\\
& \forall x_1,x_2\in[a,h_c].
\end{aligned}
\end{equation}
Compared with~\eqref{eq:ckdual}, this dual function is no longer quadratic, but a product of absolute values summed with a linear term. The dual then tries to find the tightest majorant of this two-dimensional function of the non-convex function $g(x_1,x_2)$ illustrated in Figure~\ref{fig:gfunct}. However, finding the tightest majorant of $\lambda_1|x_1-\mu||x_2-\mu|-\lambda_2x_1x_2-\lambda_3$ to a general function is not obvious, as it is a function with a kink and different behavior near the $x$ and $y$ axes from the line $y=x$, as illustrated in Figure~\ref{fig:madfunct}. This makes it more difficult to find the values of $\lambda_1,\lambda_2,\lambda_3$ that find this tightest majorant. 

Therefore, we take a different approach instead. We first focus on the one-dimensional problem: for fixed $x_1$, what is the distribution of $x_2$ that maximizes $g(x_1,x_2)$? This problem can now be solved by a one-dimensional dual problem, which is easy to solve, similarly as in Section~\ref{sec:degree}. We then take the distribution of the optimal $x_2$ (which may depend on the value of $x_1$), and then optimize over the distribution of $x_1$ as well. Now this iterative approach may introduce correlations between the distributions: it is possible that the optimal distribution for $x_1$ is different from, or dependent on, the optimal distribution of $x_2$. Still, in some cases it may be true that the output of this less restrictive optimization problem gives an i.i.d. distribution of $x_1$ and $x_2$. In that case, this method also solves~\eqref{eq:varoptchmad}. In the case when $g(x_1,x_2)$ is convex, this is known to be true. Unfortunately, for the case of computing $c(h)$, $g(x_1,x_2)$ is not convex. And indeed, the iterative optimizer is not always i.i.d. in this case. Still, we show that asymptotically, an i.i.d. optimizer achieves the same maximal value of $c(h)$. Applying this iterative method, as shown in Appendix \ref{sec:proofsMAD}, leads to the following results:

\begin{theorem}\label{thm:ckhigh}
    When $p(h,h')=\min(hh'/(\mu n),1)$ and $h_c\to\infty$ as $n\to\infty$,
    \begin{align}
       & \max_{\pprob\in\mathcal{P}(\mu,d)}\Expp{c(h)}\nonumber\\
       &=\begin{cases}
        1 & h\ll \sqrt{\mu n}  \enspace\text{and } \\
         & 2\mu(1-\sqrt{\mu/n})<d<2\mu(1-h/n)\\
        \frac{d^2}{4\mu^2 }(1+o(1)) &h\ll \sqrt{\mu n} \enspace \text{and} \enspace d<2\mu(1-\sqrt{\mu/n})\\
        \frac{d^2n}{4\mu h^2}(1+o(1)) & \sqrt{\mu n}\ll h\ll n.
        \end{cases}
    \end{align}
\end{theorem}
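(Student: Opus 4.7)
The plan is to implement the iterative strategy outlined just before the theorem: rather than attacking the symmetric problem~\eqref{eq:varoptchmad} head-on, first freeze $x_1$ and maximize $\int g(x_1,x_2)\,\mathrm{d}\pprob(x_2)$ over $\pprob\in\mathcal{P}(\mu,d)$, then optimize over the distribution of $x_1$. For fixed $x_1$, the map $x_2\mapsto g(x_1,x_2)$ is the product of the constant $\min(hx_1/(\mu n),1)$ and a piecewise quadratic--linear--constant function of $x_2$ with breakpoints $\mu n/\max(h,x_1)$ and $\mu n/\min(h,x_1)$, capped at $h_c$. This inner problem is a one-dimensional semi-infinite LP of the same shape as the one solved for Theorem~\ref{thm:annd}, so the MAD-based primal--dual machinery of Section~\ref{sec:mad} applies verbatim, producing a three-point optimizer on $\{0,\mu,L(x_1)\}$ with $L(x_1)=\min(\mu n/h,\mu n/x_1,h_c)$ and a closed-form inner value $G(x_1):=\max_{\pprob(x_2)}\int g(x_1,x_2)\,\mathrm{d}\pprob(x_2)$.

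With $G$ in hand, I would then run a second MAD-based primal--dual optimization over $\pprob(x_1)$, again producing a three-point distribution. The three regimes of the theorem arise from where $h$ sits relative to the intrinsic scale $\sqrt{\mu n}$ that governs whether $\min(x_1x_2/(\mu n),1)$ saturates. When $\sqrt{\mu n}\ll h\ll n$, the factors $\min(hx_i/(\mu n),1)$ saturate already at $x_i\asymp \mu n/h\ll\sqrt{\mu n}$ while $\min(x_1x_2/(\mu n),1)$ is still in its quadratic part, and the optimal three-point distribution places mass $\asymp dh/(2\mu n)$ on $\mu n/h$; substituting into $\Expp{c(h)}=(n^2/h^2)\Expp{g(h_1,h_2)}$ produces the $d^2 n/(4\mu h^2)$ scaling. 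When $h\ll \sqrt{\mu n}$ and $d<2\mu(1-\sqrt{\mu/n})$, the upper support point of the extremal distribution is small enough that none of the three minima saturates at the dominant pair, the $h$-dependence cancels, and the surviving expression evaluates to $d^2/(4\mu^2)$. When $h\ll\sqrt{\mu n}$ but $2\mu(1-\sqrt{\mu/n})<d<2\mu(1-h/n)$, the MAD constraint is large enough to push the upper support point onto values where both $\min(hx_i/(\mu n),1)$ and $\min(x_1x_2/(\mu n),1)$ saturate, giving $g\to 1$; the upper and lower thresholds on $d$ are exactly the feasibility windows that allow a symmetric three-point distribution to place the required mass near $h_c$ without violating the mean and MAD constraints.

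The principal obstacle is the one flagged in the discussion preceding the theorem: the iterative relaxation permits non-identically distributed optimizers, and since $g$ is not convex the relaxed optimum can a priori strictly exceed the true i.i.d.\ maximum that the theorem actually asserts. I would bridge this gap by exhibiting in each regime an explicit symmetric three-point distribution $\pprob\in\mathcal{P}(\mu,d)$ whose double integral $\int\int g\,\mathrm{d}\pprob\,\mathrm{d}\pprob$ matches the relaxed upper bound up to a $(1+o(1))$ factor. This is tractable because in every regime the dominant contribution to $\Expp{g(h_1,h_2)}$ comes from a single pair of support points, so one only needs to calibrate a symmetric distribution that places the correct asymptotic mass on that pair. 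The bulk of the technical work will then be feasibility bookkeeping, namely checking that the three-point probabilities lie in $[0,1]$ given $\mu$, $d$, and $h_c$, as those constraints carve out precisely the regime boundaries $d\lessgtr 2\mu(1-\sqrt{\mu/n})$ and $d<2\mu(1-h/n)$ that appear in the statement.
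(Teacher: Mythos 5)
Your proposal follows essentially the same route as the paper's proof: the same two-step (first $x_2$ given $x_1$, then $x_1$) MAD-based primal--dual relaxation with three-point optimizers, the same trivial upper bound of $1$ matched by an explicit feasible distribution in the saturated regime, and the same repair of the non-i.i.d.\ issue by exhibiting a symmetric three-point distribution achieving the relaxed dual value to leading order (the paper additionally observes that in the regime $\sqrt{\mu n}\ll h\ll n$ the two iterative optimizers coincide exactly, so no asymptotic slack is needed there). The approach and the feasibility bookkeeping you describe are exactly what the paper carries out.
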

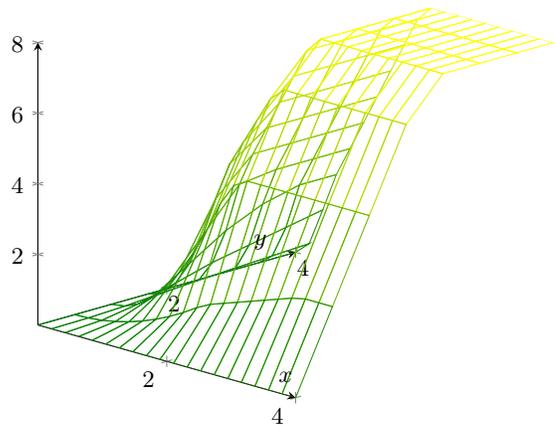
\begin{figure}
\centering
\begin{tikzpicture}
\begin{axis}[
    view={45}{20},
    xlabel=$x$,
    ylabel=$y$,
    axis lines=center, colormap/greenyellow]

\addplot3 [
    domain=0:4,
    domain y = 0:4,
    samples = 20,
    samples y = 8,
    xlabel = {$x$},
    ylabel = {$y$},
    mesh] {min(x,2)*min(y,2)*min(x*y,2)};
 
\end{axis}
\end{tikzpicture}
\caption{The function $g(x,y)$.}
\label{fig:gfunct}
\end{figure}

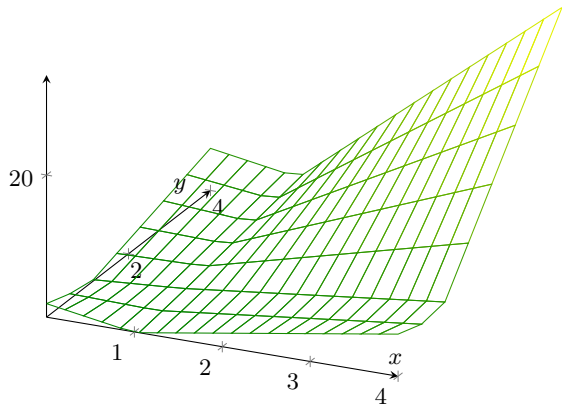
\begin{figure}
\centering
\begin{tikzpicture}
\begin{axis}[
    xlabel=$x$,
    ylabel=$y$,
    axis lines=center, colormap/greenyellow]

\addplot3 [
    domain=0:4,
    domain y = 0:4,
    samples = 20,
    samples y = 8,
    xlabel = {$x$},
    ylabel = {$y$},
    mesh] {2*abs(x-1)*abs(y-1)+x*y};
 
\end{axis}
\end{tikzpicture}
\caption{The function $2|x-1||y-1|+xy$.}
\label{fig:madfunct}
\end{figure}

 While from the perspective of i.i.d.~weight sampling of the random graph it is natural to constrain the two vertices that form a triangle together with the degree-$h$ node to be sampled from the same distribution, in the optimization problem it is also possible to find the optimal pair of correlated distributions over the two nodes that form a triangle together with the weight-$h$ node. In that case, the nodes that form a triangle together with the weight-$h$ node are both sampled from the ambiguity set $\mathcal{P}_{(\mu,d)}$, so that they still both have mean weight $\mu$ and MAD $d$. However, the weights of the two nodes are now allowed to be correlated. 

In the proof of Theorem~\ref{thm:ckhigh}, we show that in the ranges where Theorem~\ref{thm:ckhigh} is valid, adding correlations between the distributions of $h_1$ and $h_2$ does not make a difference in the scaling for $c(h)$. Indeed, the upper bound of 1 is always valid for $c(h)$, so that lifting the constraint on the i.i.d.~distributions of $h_1$ and $h_2$ cannot increase the optimal value of $c(h)$. For the second setting where $\sqrt{\mu n}\ll h\ll n$, we in fact show in the proof of Theorem~\ref{thm:ckhigh} that by optimizing $c(h)$ over a set of probability distributions on $h_1$ and $h_2$ which are allowed to be correlated, we end up with i.i.d. distributions as the optimizer. So here also lifting the constraint on the joint distribution will not increase $c(h)$. Furthermore, in the case $h\ll\sqrt{\mu n}$ and $d$ is small, allowing correlations only increases the non-leading order terms, so that asymptotically, $c(h)$ cannot be increased by allowing correlations between the distributions of $h_1$ and $h_2$.

\section{Comparison with other networks}\label{sec:powerlaw}

We now compare the extremal values of $a(k)$ and $c(k)$ to several synthetic and real network data.
\paragraph{Erd\H{o}s-R\'enyi model.}
In the Erd\H{o}s-R\'enyi model, $h_i=\mu, \forall i$, and all vertices connect with probability $\mu/n$. Thus, $\sigma^2=0$ and $d=0$. The zero variance and MAD means that there is only one distribution in the ambiguity sets~\eqref{eq:psupp} and~\eqref{eq:psuppmad}, which is $h_i=\mu, \forall i$. Therefore, Theorem~\ref{thm:anndvariance} and~\ref{thm:annd} predict that the maximal value of $a(h)=\mu$, which is the exact average weight of a neighbor in an Erd\H{o}s-R\'enyi random graph, as all vertices have weight $\mu$. Furthermore, Theorem~\ref{thm:ckvariance} and~\ref{thm:ckhigh} predict that the maximal value of $c(h)=\mu/n$, which is also equal to $c(h)$ in an Erd\H{o}s R\'enyi model, as the probability that two neighbors of a vertex connect is $\mu/n$, the same as the connection probability for all pairs of vertices. Thus, for Erd\H{o}s-R\'enyi random graphs, our bounds are tight.

\paragraph{Poisson random graph.}
When the hidden-variables have a Poisson distribution with mean $\mu$, the second moment of the weight distribution is $\mu+\mu^2$. 
For such networks, $hh'<\mu n$ almost surely for all $h\ll n$. Thus,~\eqref{eq:ahexp} gives $\Exp{a(h)}=1+\mu$. Applying Theorem~\ref{thm:anndvariance} yields that the maximal possible $a(h)$ value among all networks with the same mean and variance of the weights also equals $1+\mu$. Similarly $\Exp{c(h)}=(\mu +\mu^2)^2/(\mu^3n)$ for Poisson random graphs by Equation~\eqref{eq:expck}, while by Theorem~\ref{thm:ckvariance}, the maximal value of $c(k)$ among all power-law random graphs also equals $(\mu +\mu^2)^2/(\mu^3n)$. Thus, Poisson random graphs achieve the maximal bounds for $a(h)$ and $c(h)$ exactly. 

When looking at the MAD-based bounds, the picture changes drastically. Indeed, for a Poisson random variable with integer mean $\mu$~\cite{ramasubban1958}
\begin{equation}
    d=2\mu^{\mu+1}e^{-\mu}/\mu!,
\end{equation}
so that by Theorem~\ref{thm:annd} the maximal value of $a(h)$ scales for low $h$ as $h_c\mu^{\mu}e^{-\mu}/\mu!,$ which can be much larger than the achieved value of $1+\mu$. 

For $c(k)$, Theorem~\ref{thm:ckhigh} yields that for $h$ low, the maximal value of $c(h)$ for random graphs with the same mean and MAD as the Poisson random graph equals $(\mu^{\mu}e^{-\mu}/\mu! )^2$, which is an $n$-independent constant, in contrast to the achieved value of $c(h)=(\mu +\mu^2)^2/(\mu^3n)$, which decays in $n$. Thus, our variance-based bounds can be significantly lower than the ones based on equal MAD. 

\paragraph{Comparing power-law $a(h)$ to extremal $a(h)$.}
We now turn to random graphs with power-law distributed weights. 
We first compare the maximal scaling of $a(h)$ given by Theorem~\ref{thm:anndvariance} to the value of $a(h)$ attained by the power-law weight distribution
\begin{equation}
    \Prob{h>x}=Cx^{1-\tau}.
\end{equation}
When sampling $n$ i.i.d.~weights from this distribution, the maximal weight scales as $h_c=n^{1/(\tau-1)}$ with high probability.
In such power-law Chung-Lu models~\cite{yao2017,stegehuis2017b},
\begin{align}\label{eq:ahCL}
    a(h)\sim \begin{cases} n^{(3-\tau)/(\tau-1)} & h\ll n^{(\tau-2)/(\tau-1)}\\
     (n/h)^{3-\tau}& h\gg n^{(\tau-2)/(\tau-1)}.
    \end{cases}
\end{align}
We now investigate how close this value of $a(h)$ is to the maximal possible values among all Chung-Lu models with the same mean and variance as the power-law distribution. For power-law distributed weights, $\sigma^2\sim n^{(3-\tau)/(\tau-1)}$, as derived in~\eqref{eq:plsigma}. Thus, Theorem~\ref{thm:anndvariance} yields
\begin{align}\label{eq:aklow}
\max_{\pprob\in\mathcal{P}(\mu,n^{(3-\tau)/(\tau-1)})} \Expp{a(h)} &= \frac{\mu^2+n^{(3-\tau)/(\tau-1)}}{\mu}\nonumber\\
& \sim n^{(3-\tau)/(\tau-1)},
\end{align}
when $h<n^{(2\tau-4)/(\tau-1)}$, while 
\begin{align}\label{eq:akhigh}
\max_{\pprob\in\mathcal{P}(\mu,n^{(3-\tau)/(\tau-1)})} \Expp{a(h)} &= \frac{\mu n}{h},
\end{align}
when $h>n^{(2\tau-4)/(\tau-1)}$
For large $h$, the scaling in~\eqref{eq:ahCL} only agrees with the value of $\mu n/h$ of~\eqref{eq:akhigh} for $\tau=2$. For low $h$, the power-law value of $a(h)$ of~\eqref{eq:ahCL} agrees with the scaling of~\eqref{eq:aklow} for all $\tau\in(2,3)$. This indicates that a power-law distribution asymptotically achieves the most extreme values of $a(h)$ possible for $h$ small among all random graphs with the same mean and variance of the degree distribution. For larger $h$, the extremal $a(h)$ scaling is only attained for power-law random graphs for $\tau=2$. Indeed, Figure~\ref{fig:akplvar} illustrates that the variance-based upper bound on $a(k)$ is close to the value achieved by a power-law Chung-Lu model when $\tau\approx 2$, and that power-law graphs with higher degree-exponents have a closer gap with the maximal possible $a(h)$ value.

We now again compare $a(h)$ of the the power-law distribution~\eqref{eq:ahCL} to a matching extremal value, but now the extremal random graph based on a matching mean and MAD. For power-laws~\cite{leeuwaarden2021},
\begin{equation}\label{eq:pld}
	d=\frac{C(2\mu^{2-\tau}-1-h_c^{2-\tau}))}{\tau-2}+\frac{C\mu(-2\mu^{1-\tau}+1+h_c^{1-\tau}))}{\tau-1}.
\end{equation}
Thus, for $\tau>2$, $d$ is approximately constant.
Comparing Theorem~\ref{thm:annd} where we take $d$ constant and $h_c=n^{1/(\tau-1)}$ with~\eqref{eq:ahCL} shows that 
\begin{align}
\max_{\pprob\in\mathcal{P}^*(\mu,d)} \Expp{a(h)} &=\begin{cases} \frac{d}{2\mu}n^{1/(\tau-1)} & h<\mu n^{(\tau-2)/(\tau-1)}\\
\frac{dn}{2h} & h>\mu n^{(\tau-2)/(\tau-1)}.
\end{cases}
\end{align}
Comparing this with~\eqref{eq:ahCL}, shows that for $\tau=2$, the maximal degree-degree correlations among all Chung-Lu random graphs with given MAD, and $\mu$ scales the same as for the power-law distribution. Still, Figure~\ref{fig:akplmad} shows that the differences in constants in the change point as well as in the maximal scaling make the power-law $a(h)$ to be quite far from the MAD-based bound, even for $\tau\approx 2$. Note also that the MAD-based bounds sometimes drops below the actual power-law value for $h$ large. This is because we plot the values of $a(k)$ all the way up to $h=n$, while the ($\tau$-dependent) cutoff lies already at $h_c=n^{1/(\tau-1)}$, which is close to 400 for $\tau=2.9$ for example. Thus, for all $h$ in $[1,h_c]$, the MAD-bound is a valid upper bound.

\begin{figure*}[tbp]
    \centering
    \subfloat[]
    {
    \includegraphics[width=0.4\textwidth]{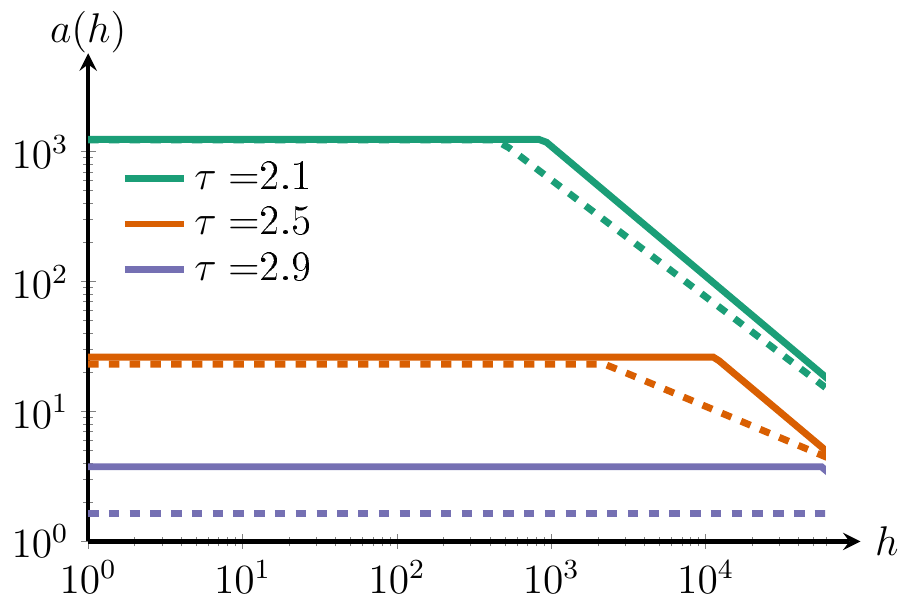}
    \label{fig:akplvar}
    }
    \hspace{1cm}
    \subfloat[]{
    \includegraphics[width=0.4\textwidth]{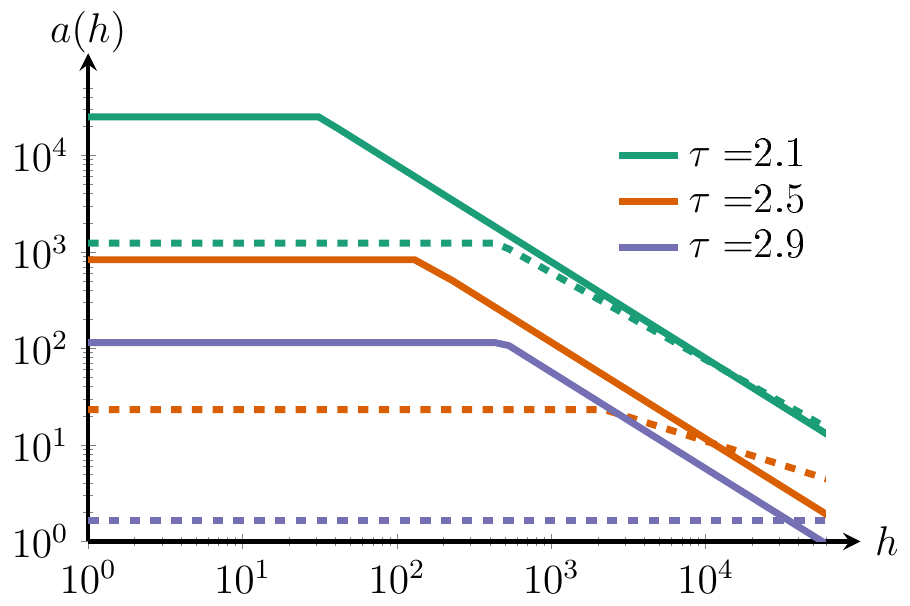}
    \label{fig:akplmad}
    }
    
        \subfloat[]
    {
    \includegraphics[width=0.4\textwidth]{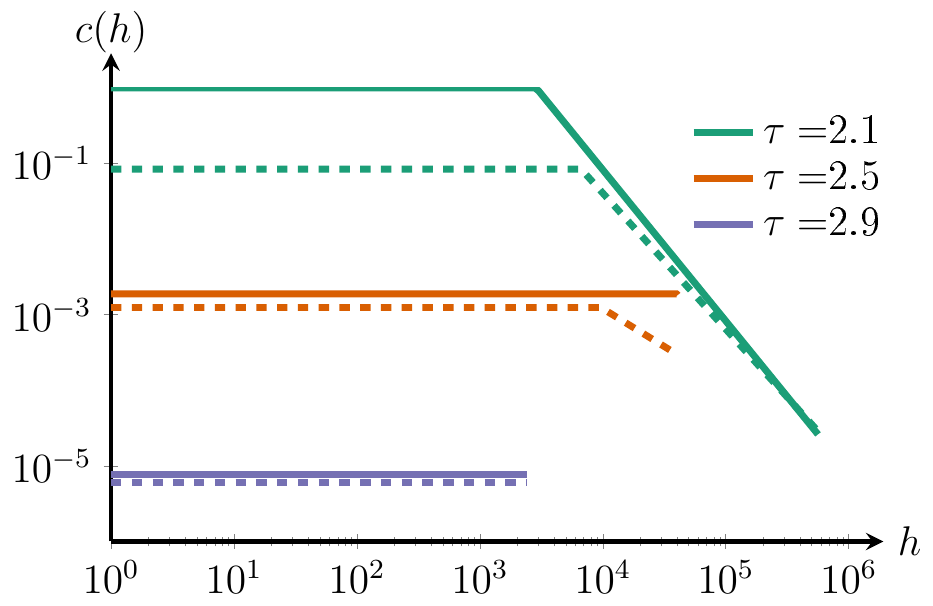}
    \label{fig:ckplvar}
    }
    \hspace{1cm}
    \subfloat[]{
    \includegraphics[width=0.4\textwidth]{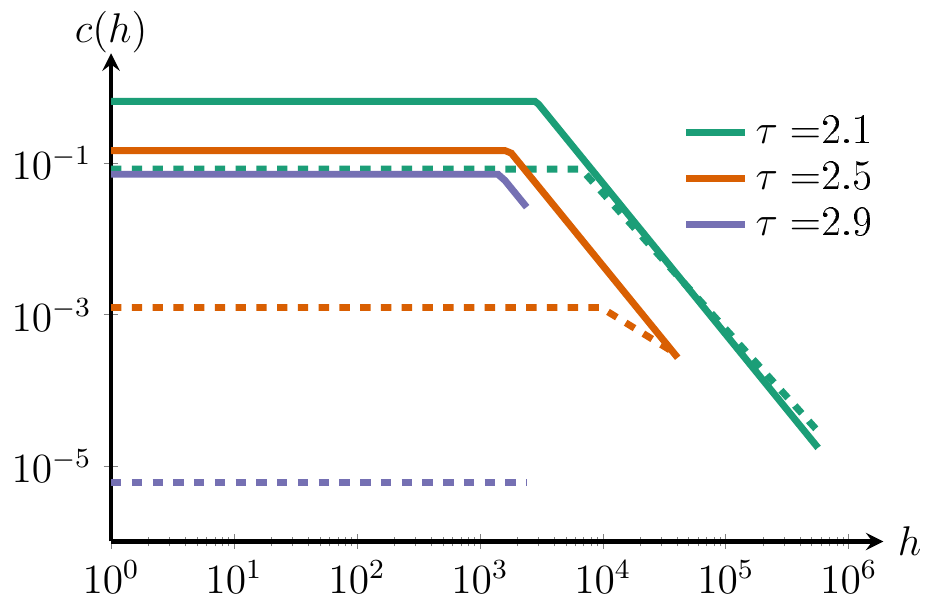}
    \label{fig:ckplmad}
    }
    \caption{Maximal scaling compared with power-law with the same parameters (dashed line) for $n=10^5$ on a) $a(k)$, variance basis (solid line), b) $a(k)$, MAD-basis (solid line) c) $c(k)$, variance basis (solid line), d) $c(k)$, MAD-basis (solid line).}
    \label{fig:akpl}
\end{figure*}

\paragraph{Comparing power-law $c(h)$ to extremal $c(h)$.}
We now turn to $c(k)$.
In power-law Chung-Lu models with cutoff $h_c=n^{1/(\tau-1)}$~\cite{stegehuis2017b}, 
\begin{equation}\label{eq:ckpl}
    c(h)\sim \begin{cases}
    n^{2-\tau}\ln(n) & h \ll n^{(\tau-2)/(\tau-1)}\\
    n^{2-\tau}\ln(n/h) & n^{(\tau-2)/(\tau-1)}\ll h \ll \sqrt{n}\\
    h^{2\tau-6}n^{5-2\tau} & h \gg\sqrt{n}.
    \end{cases}
\end{equation}
 We now compare this scaling to the scaling obtained by Theorem~\ref{thm:ckvariance}. Using~\eqref{eq:plsigma} shows that Theorem~\ref{thm:ckvariance} predicts that 
 \begin{equation}
     \max_{\pprob\in\mathcal{P}(\mu,n^{(3-\tau)/(\tau-1)})} \Expp{c(h)}\sim \min(n^{(7-3\tau)/(\tau-1)},1)
 \end{equation}
 for $h$ small, while it scales as $n/h^2$ for $h$ large. For $\tau=2$, this coincides with~\eqref{eq:ckpl}. Therefore, Theorem~\ref{thm:ckvariance} implies that the maximal $c(h)$ scaling among all Chung-Lu random graphs with given $\sigma^2$ and $\mu$ for $h\gg\sqrt{n}$ is achieved by the power-law distribution for $\tau=2$. However, for $h\ll\sqrt{n}$, the power-law distribution does not attain the largest possible value of $c(h)$. Figure~\ref{fig:ckplvar} illustrates this. The power-law $c(k)$ is even in its constant, very close to the variance-based maximal value of $c(k)$. For $\tau\approx 2$, the changing point between the constant regime and the decaying regime becomes close, while for larger values of $\tau$, the difference in changing point is larger. 



 For the MAD-based optimizer, we can obtain similar statements. Theorem~\ref{thm:ckhigh} predicts that 
  \begin{equation}
     \max_{\pprob\in\mathcal{P}^*(\mu,d)} \Expp{c(h)}\sim n/h^2
 \end{equation}
 for $h$ sufficiently high. Therefore, Theorem~\ref{thm:ckhigh} implies that the maximal $c(h)$ scaling among all Chung-Lu random graphs with given MAD, $h_c$ and $\mu$ for $h\gg\sqrt{n}$ is achieved by the power-law distribution for $\tau=2$. Indeed, Figure~\ref{fig:ckplmad} shows that the MAD-based optimal value of $c(k)$ is close to the power-law based one for $\tau\approx 2$, but that the bound can be far off otherwise. Furthermore, note that for $\tau=2.1$ the MAD-based optimal bound drops below the power-law achieved value. This is because of finite-size effects that are not included in Theorem~\ref{thm:ckhigh}.

\paragraph{Comparing power-law clique counts to extremal clique counts.}
We now compare the maximal amount of cliques predicted by Theorem~\ref{thm:Kkvar} to the amount of cliques achieved by a power-law random graph. When $2<\tau<3$, under a cutoff at $b=\sqrt{\mu n}$, the expected number of cliques in a power-law random graph with degree-exponent $\tau$ equals~\cite[Eq.~(1.7)]{janssen2019}
\begin{equation}\label{eq:Ekkcutoffpl}
	\expec_{pl}[N_{K_k}]\approx \frac{ n^{k/2(3-\tau)}\mu^{k/2(1-\tau)}}{k!}\left(\frac{C}{k-\tau}\right)^k.
\end{equation}
Now under a cutoff at $b=\sqrt{\mu n}$, for power-law random graphs, $\sigma^2=\frac{C}{3-\tau}\sqrt{\mu n}^{3-\tau}$. 
Plugging in $\sigma^2=\frac{C}{3-\tau}\sqrt{\mu n}^{3-\tau}$ from the power-law distribution into~\eqref{eq:Kkoptvarnocutoff} yields
\begin{equation}\label{eq:Ekkcutoff}
	\max_{\mathbb{P} \in \mathcal{P}^*_{(\mu,\sigma^2)}}\Expp{N_{K_k}}= \frac{ n^{k/2(3-\tau)}\mu^{k/2(1-\tau)}}{k!}\left(\frac{C}{3-\tau}\right)^k.
\end{equation}
This agrees in terms of scaling in $n$ and $\mu$ with the power-law Chung-Lu number of cliques scaling of~\eqref{eq:Ekkcutoffpl}.   So this proves that for variance-based clique optimization, power-laws contain the most number of cliques in terms of scaling in $n$ when using a cutoff at $b=\sqrt{\mu n}$. Still, the leading constant in~\eqref{eq:Ekkcutoff} is higher than the one in~\eqref{eq:Ekkcutoffpl} for $k>3$, so that the extremal random graph for cliques still achieves a higher total number of cliques in the leading order constant.

When $h_c=n^{1/(\tau-1)}$, then $\expec_{pl}[K_k]\sim n^{k(3-\tau)/2}$. Furthermore, in that setting, $\sigma^2\sim n^{(3-\tau)/(\tau-1)}$, while $\mu$ does not grow in $n$. Therefore, in this case, Theorem~\ref{thm:Kkvar} does not apply for $\tau<7/3$, as $\sigma^2\geq \sqrt{n} $ when $\tau<7/3$, so that the condition of Theorem~\ref{thm:Kkvar} does not apply. For $\tau>7/3$, plugging in the power-law value of  $\sigma^2\sim n^{(3-\tau)/(\tau-1)}$ into Theorem~\ref{thm:Kkvar} yields
\begin{equation}
    \max_{\pprob\in\mathcal{P}^*(\mu,\sigma^2)}\Expp{N(K_k)}\sim n^{k(3-\tau)/(\tau-1)},
\end{equation}
which is larger than the power-law scaling of $n^{k(3-\tau)/2}$. Therefore, power-law random graphs are not the graphs that contain the most cliques among all random graphs with the same mean and variance.

\paragraph{Data}\label{sec:data}
We now apply our bounds for $a(k)$ and $c(k)$ to three real-world network data sets. 
Figure~\ref{fig:aktot} compares the variance-based upper bound of Theorem~\ref{thm:anndvariance} with empirical observations. For all data sets, the true value of $a(k)$ exceeds the variance-based maximizer at some point. The MAD-based maximizer on the other hand, remains an upper bound for $a(k)$ in almost all data sets. This highlights the importance of the right choice of comparison model: The hidden-variable model with fitted variance cannot explain the degree-degree correlations in these data sets, while the same model with fitted MAD can. 
\begin{figure*}
\centering
    \subfloat[]
    {
    \centering
    \includegraphics[width=0.31\textwidth]{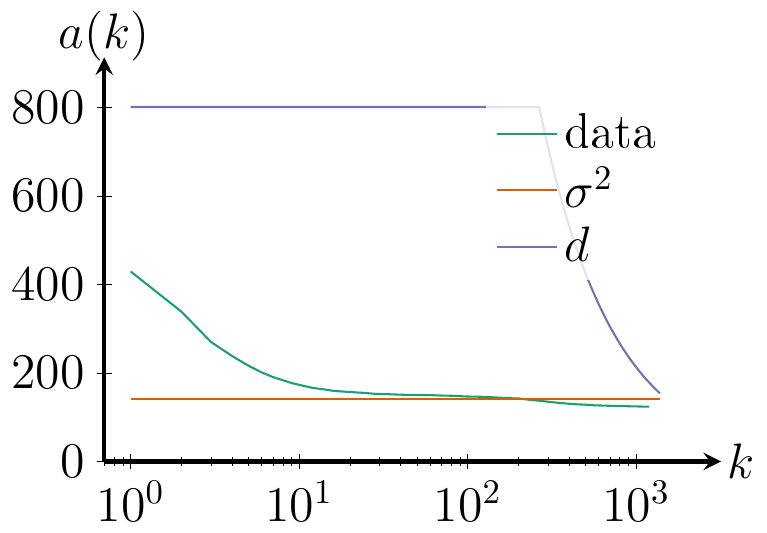}
    \label{fig:akenron}
    }
    \subfloat[]
    {
    \centering
    \includegraphics[width=0.31\textwidth]{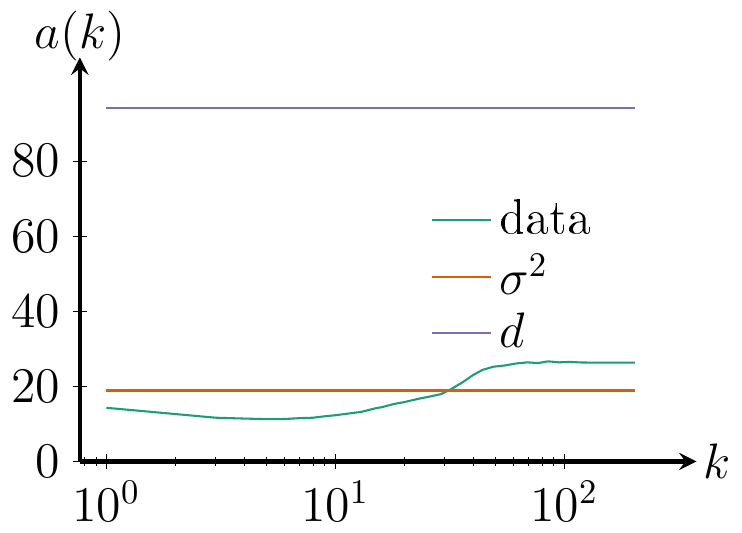}
    \label{fig:akpgp}
    }
     \subfloat[]
    {
    \centering
    \includegraphics[width=0.31\textwidth]{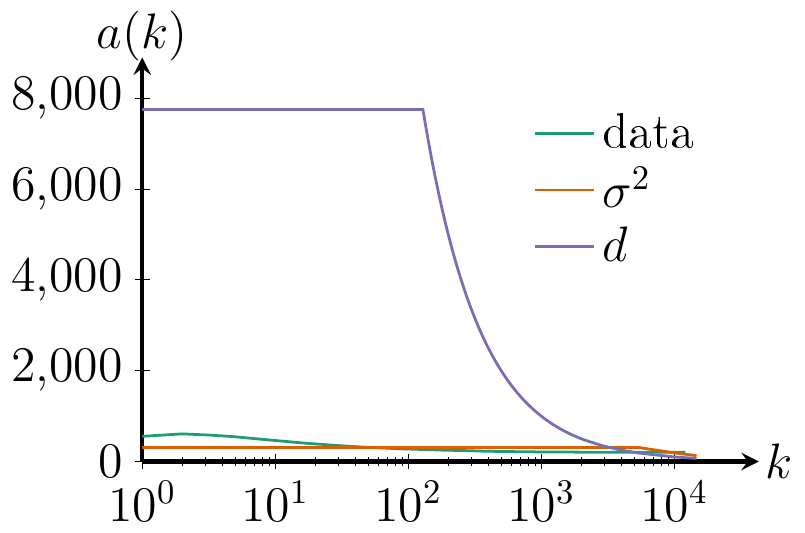}
    \label{fig:akgowalla}
    }
    \caption{$a(k)$ and MAD and variance-based bounds for 3 real-world networks a) Enron email network~\cite{klimt2004}, b) Pretty Good Privacy network~\cite{boguna2004a}, c) Gowalla social network~\cite{cho2011}.}
    \label{fig:aktot}
\end{figure*}

Figure~\ref{fig:cktot} shows for three real-world networks $c(k)$ and the variance- and MAD-based bounds. The $c(k)$-values of the Gowalla data set are close to, or below the MAD and the variance-based optimizer, respectively. This suggests that these data sets can be suitably modeled by some hidden-variable model that matches the $c(k)$ distribution of this data set. For the two other data sets on the other hand, the value of $c(k)$ in the data sets is higher than can be achieved by any hidden-variable model. Therefore, no hidden-variable model is able to match these data sets in terms of $c(k)$, which is likely caused by the locally-tree like nature of the hidden-variable model.

\begin{figure*}
\centering
    \subfloat[]
    {
    \centering
    \includegraphics[width=0.31\textwidth]{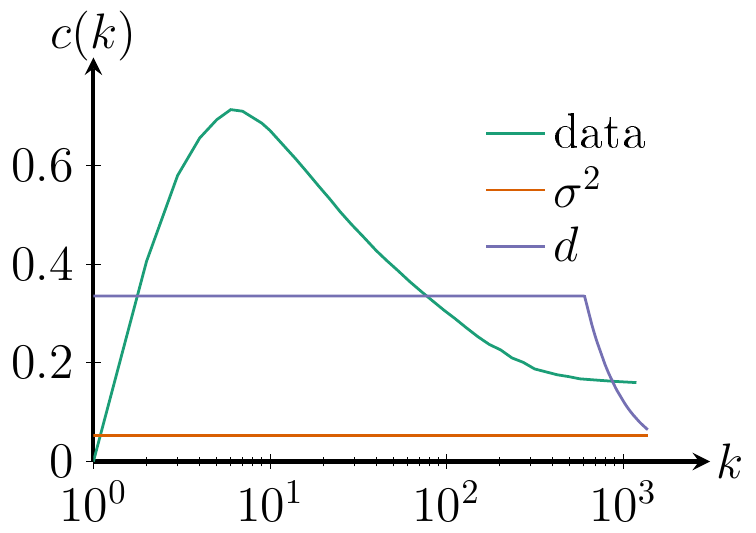}
    \label{fig:ckenron}
    }
    \subfloat[]
    {
    \centering
    \includegraphics[width=0.31\textwidth]{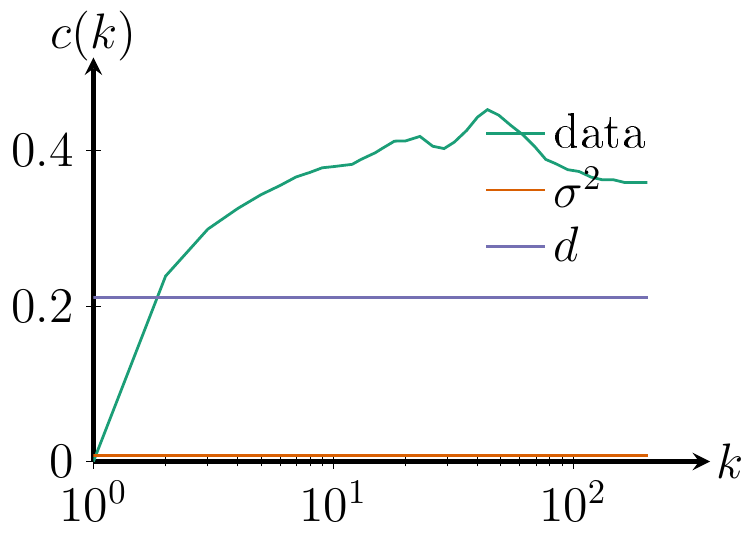}
    \label{fig:ckpgp}
    }
     \subfloat[]
    {
    \centering
    \includegraphics[width=0.31\textwidth]{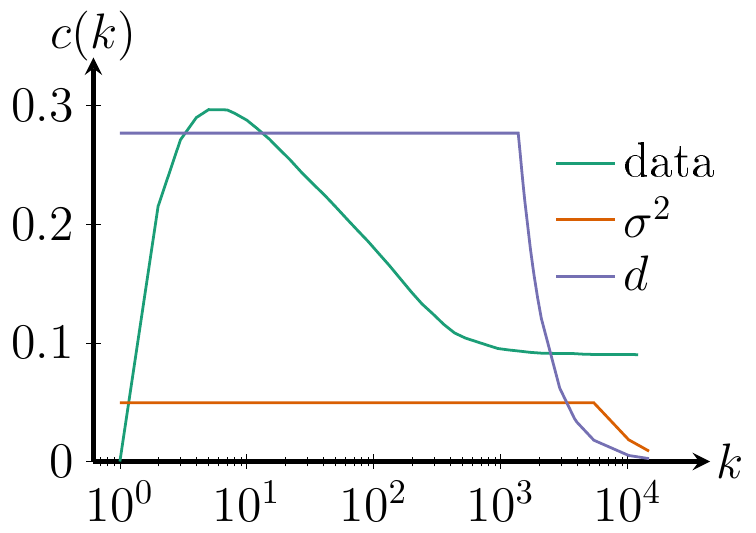}
    \label{fig:ckgowalla}
    }
    \caption{$c(k)$ and MAD and variance-based bounds for 3 real-world networks, a) Enron email network~\cite{klimt2004}, b) Pretty Good Privacy network~\cite{boguna2004a}, c) Gowalla social network~\cite{cho2011}.}
    \label{fig:cktot}
\end{figure*}

\section{Conclusions and discussion}

Our  robust network perspective, in terms of ambiguity and partial information on the degree distribution, comes with substantial mathematical challenges. We created an optimization framework for identifying, within some ambiguity set, the extreme degree distribution that generates the upper bound for the degree-degree correlations and clustering. We therefore had to combine probabilistic models (random graphs) with optimization models (stochastic programs). For successfully applying our robust perspective it is crucial to solve a given stochastic program in closed form. Here we distinguish between 1D programs such as the semi-infinite LP for the expected degree-degree correlation, and 2D programs such as for the expected clustering. For the 1D programs with both variance and MAD information, we were able to apply standard primal-dual techniques, solving the dual problem with the tightest majorant, and indeed finding a closed-form extremal distribution. 

The 2D programs for expected clustering proved to be more challenging, being semi-infinite programs with two  i.i.d.~random variables. This i.i.d.~assumption creates nonlinear conditions that prohibits the usage of standard primal-dual techniques. For variance information, we therefore applied a relaxation technique that replaces the original program with its counterpart that allows correlations. That relaxed program proved solvable with the primal-dual technique, despite the additional challenges of 2D instead of 1D. Surprisingly, the found extremal joint distribution was a product-form distribution after all, so that we could thus show that the relaxed program has the same solution as the original program. We showed that similar relaxations for programs of higher dimensions could be also be used to establish tight (asymptotic) bounds for clique counts.

However, such relaxations proved cumbersome, if not intractable, for MAD information. In that case, we opted for a different relaxation, in which we solve the 2D program in two steps: first finding the worst-case distributions of variable 1, and then, given this worst-case distribution, finding the worst-case distribution distribution of variable 2. This relaxation also allows correlation between the random variables, but possibly of a different nature.  This second type of relaxation turned out to give worst-case distributions that become 
independent (product-form) distributions when the network size grows to infinity. Hence, in this way, we could solve the original 2D MAD program  asymptotically for $n\to\infty$. 

This paper forms an important step towards a more complete theory of Distributionally Robust Random Graphs (DRRGs). This theory exchanges full information random graphs with partial information models, for instance regarding the degree distribution. This means that the theory applies to a large class of distributions, and can possibly explain complex network phenomena in a more universal manner, less dependent on the specific distributional assumptions. Hence, we consider this a robust way of studying complex networks. 

Here we mention a few open problems and research directions. For solving the 2D program we have introduced two relaxations, one for variance and one for MAD. Do both relaxations work for both variance and MAD? Can we understand when the two relaxations are equivalent, and when they are not? 
Do these relaxations also work in higher-dimensional stochastic programs, with more than two i.i.d.~random variables? In this paper we have successfully applied one such higher-dimensional relaxation for cliques of arbitrary size. 

Another avenue for future research concerns model extensions. 
We considered the Chung-Lu-type random graph, a tractable model that can accommodate a wide range of degree sequences. Drawbacks of this model are its locally tree-like nature and the slow convergence to the large-network limit~\cite{janssen2019}. 
%
This motivates to consider extended models such as random geometric graphs~\cite{penrose2003} or
generalizations of the popular hyperbolic random graph~\cite{krioukov2010,bringmann2015}, where the connection probability of two vertices scales as the product of their weights, divided by their distance. What is the maximal value of $c(k)$ for given mean and variance on the degrees and given mean and variance on the inter-distances? This question will lead to a more involved optimization problem with more variables due to the underlying geometry. 
Such upper bounds for increasing model complexity could detect what level of complexity is necessary to model a specific network property correctly. 

Furthermore, while we investigated robust degree distributions, one can also think of other network properties. For example, in temporal network models one can obtain results for robust edge time-stamps. For hypergraphs, one can think of robust hyperdegrees, or of robust positions for geometric models. We believe that this framework can provide robust upper bounds on several network properties, and quantify the sensitivity of network models to specific assumptions on their parameters. 

Finally, it is
worthwhile to compare extremal random graphs as studied in this paper with entropy-maximizing random graph ensembles~\cite{park2004}.
While we maximize a given network property (such as clustering or degree-correlations), maximum entropy random graphs maximize the entropy instead, and then compute the value of the property. This calls to investigate for classes of degree distributions and network properties whether extremal random graph and maximum entropy random graphs are comparable or not.

\acknowledgments{We thank Dick den Hertog, Wouter van Eekelen and Pieter Kleer for various thought exchanges about the relaxations introduced in this paper for solving semi-infinite linear programs, and robust optimization in general. JB is supported by an NWO Mathematics Clusters grant, JvL is supported by an NWO VICI grant. CS is supported by NWO VENI grant 202.001 and NWO M2 grant 0.379.}

\bibliographystyle{abbrv}
\bibliography{references}

\begin{thebibliography}{10}

\bibitem{BenTal1972}
B.-T. A. and E.~Hochman.
\newblock More bounds on the expectation of a convex function of a random
  variable.
\newblock {\em Journal of Applied Probability}, 9(4):803--812, dec 1972.

\bibitem{bianconi2005}
G.~Bianconi and M.~Marsili.
\newblock Loops of any size and hamilton cycles in random scale-free networks.
\newblock {\em Journal of Statistical Mechanics: Theory and Experiment},
  2005(06):P06005, 2005.

\bibitem{blasius2018a}
T.~Blasius, T.~Friedrich, A.~Krohmer, and S.~Laue.
\newblock Efficient embedding of scale-free graphs in the hyperbolic plane.
\newblock {\em {IEEE}/{ACM} Transactions on Networking}, 26(2):920--933, 2018.

\bibitem{boguna2003}
M.~Bogu\~n\'a and R.~Pastor-Satorras.
\newblock Class of correlated random networks with hidden variables.
\newblock {\em Phys. Rev. E}, 68:036112, 2003.

\bibitem{boguna2004a}
M.~Bogu{\~{n}}{\'{a}}, R.~Pastor-Satorras, A.~D{\'{\i}}az-Guilera, and
  A.~Arenas.
\newblock Models of social networks based on social distance attachment.
\newblock {\em Phys. Rev. E}, 70(5), 2004.

\bibitem{boguna2004}
M.~Bogun{\'a}, R.~Pastor-Satorras, and A.~Vespignani.
\newblock Cut-offs and finite size effects in scale-free networks.
\newblock {\em The European Physical Journal B}, 38(2):205--209, 2004.

\bibitem{bringmann2015}
K.~Bringmann, R.~Keusch, and J.~Lengler.
\newblock Geometric inhomogeneous random graphs.
\newblock {\em Theoretical Computer Science}, 760:35--54, feb 2019.

\bibitem{broido2018}
A.~D. Broido and A.~Clauset.
\newblock Scale-free networks are rare.
\newblock {\em Nature Communications}, 10(1), mar 2019.

\bibitem{catanzaro2005}
M.~Catanzaro, M.~Bogu\~n\'a, and R.~Pastor-Satorras.
\newblock Generation of uncorrelated random scale-free networks.
\newblock {\em Phys. Rev. E}, 71:027103, Feb 2005.

\bibitem{chen2020}
L.~Chen, D.~Padmanabhan, C.~C. Lim, and K.~Natarajan.
\newblock Correlation robust influence maximization.
\newblock {\em 34th Conference on Neural Information Processing Systems
  (NeurIPS 2020), Vancouver, Canada}, Oct. 2020.

\bibitem{cho2011}
E.~Cho, S.~A. Myers, and J.~Leskovec.
\newblock Friendship and mobility: user movement in location-based social
  networks.
\newblock In {\em Proceedings of the 17th ACM SIGKDD international conference
  on Knowledge discovery and data mining}, pages 1082--1090. ACM, 2011.

\bibitem{chung2002}
F.~Chung and L.~Lu.
\newblock The average distances in random graphs with given expected degrees.
\newblock {\em Proc. Natl. Acad. Sci. USA}, 99(25):15879--15882, 2002.

\bibitem{clauset2009}
A.~Clauset, C.~R. Shalizi, and M.~E.~J. Newman.
\newblock Power-law distributions in empirical data.
\newblock {\em SIAM Rev.}, 51(4):661--703, 2009.

\bibitem{colomer2012}
P.~Colomer-de Simon and M.~Bogu\~n\'a.
\newblock Clustering of random scale-free networks.
\newblock {\em Phys. Rev. E}, 86:026120, 2012.

\bibitem{ebel2002}
H.~Ebel, L.-I. Mielsch, and S.~Bornholdt.
\newblock Scale-free topology of e-mail networks.
\newblock {\em Physical Review E}, 66(3), sep 2002.

\bibitem{eekelen2019}
W.~{\swap{Eekelen}{van }}, D.~den Hertog, and J.~S.~H. van Leeuwaarden.
\newblock {MAD} dispersion measure makes extremal queue analysis simple.
\newblock 2019.

\bibitem{faloutsos1999}
M.~Faloutsos, P.~Faloutsos, and C.~Faloutsos.
\newblock On power-law relationships of the internet topology.
\newblock In {\em ACM SIGCOMM Computer Communication Review}, volume~29, pages
  251--262. ACM, 1999.

\bibitem{gallagher2021}
R.~J. Gallagher, J.-G. Young, and B.~F. Welles.
\newblock A clarified typology of core-periphery structure in networks.
\newblock {\em Science Advances}, 7(12), mar 2021.

\bibitem{han2015convex}
S.~Han, M.~Tao, U.~Topcu, H.~Owhadi, and R.~M. Murray.
\newblock Convex optimal uncertainty quantification.
\newblock {\em SIAM Journal on Optimization}, 25(3):1368--1387, 2015.

\bibitem{hofstad2017b}
R.~{\swap{Hofstad}{van der }}, A.~J. E.~M. Janssen, J.~S.~H. van Leeuwaarden,
  and C.~Stegehuis.
\newblock Local clustering in scale-free networks with hidden variables.
\newblock {\em Phys. Rev. E}, 95(2):022307, 2017.

\bibitem{van2021optimal}
R.~{\swap{Hofstad}{van der }}, J.~S.~H. van Leeuwaarden, and C.~Stegehuis.
\newblock Optimal subgraph structures in scale-free configuration models.
\newblock {\em The Annals of Applied Probability}, 31(2):501--537, 2021.

\bibitem{Isii1962}
K.~Isii.
\newblock On sharpness of {T}chebycheff-type inequalities.
\newblock {\em Annals of the Institute of Statistical Mathematics},
  14(1):185--197, 1962.

\bibitem{itzkovitz2003subgraphs}
S.~Itzkovitz, R.~Milo, N.~Kashtan, G.~Ziv, and U.~Alon.
\newblock Subgraphs in random networks.
\newblock {\em Physical review E}, 68(2):026127, 2003.

\bibitem{janson2009b}
S.~Janson.
\newblock On percolation in random graphs with given vertex degrees.
\newblock {\em Electron. J. Probab.}, 14:86--118, 2009.

\bibitem{janssen2019}
A.~J. E.~M. Janssen, J.~S.~H. van Leeuwaarden, and S.~Shneer.
\newblock Counting cliques and cycles in scale-free inhomogeneous random
  graphs.
\newblock {\em Journal of Statistical Physics}, 175(1):161--184, feb 2019.

\bibitem{jeong2000}
H.~Jeong, B.~Tombor, R.~Albert, Z.~N. Oltvai, and A.-L. Barab{\'a}si.
\newblock The large-scale organization of metabolic networks.
\newblock {\em Nature}, 407(6804):651--654, 2000.

\bibitem{klimt2004}
B.~Klimt and Y.~Yang.
\newblock Introducing the {E}nron {C}orpus.
\newblock In {\em CEAS}, 2004.

\bibitem{krioukov2010}
D.~Krioukov, F.~Papadopoulos, M.~Kitsak, A.~Vahdat, and M.~Bogun{\'a}.
\newblock Hyperbolic geometry of complex networks.
\newblock {\em Phys. Rev. E}, 82(3):036106, 2010.

\bibitem{snap}
J.~Leskovec and A.~Krevl.
\newblock {SNAP Datasets}: {Stanford} large network dataset collection.
\newblock \url{http://snap.stanford.edu/data}, 2014.
\newblock Date of access: 14/03/2017.

\bibitem{mossa2002}
S.~Mossa, M.~Barth{\'{e}}l{\'{e}}my, H.~E. Stanley, and L.~A.~N. Amaral.
\newblock Truncation of power law behavior in
  {\textquotedblleft}scale-free{\textquotedblright} network models due to
  information filtering.
\newblock {\em Physical Review Letters}, 88(13), mar 2002.

\bibitem{newman2001a}
M.~E.~J. Newman.
\newblock The structure of scientific collaboration networks.
\newblock {\em Proceedings of the National Academy of Sciences},
  98(2):404--409, jan 2001.

\bibitem{newman2001}
M.~E.~J. Newman, S.~H. Strogatz, and D.~J. Watts.
\newblock Random graphs with arbitrary degree distributions and their
  applications.
\newblock {\em Phys. Rev. E}, 64(2):026118, 2001.

\bibitem{park2004}
J.~Park and M.~E.~J. Newman.
\newblock Statistical mechanics of networks.
\newblock {\em Phys. Rev. E}, 70:066117, Dec 2004.

\bibitem{pastor2001b}
R.~Pastor-Satorras, A.~V\'azquez, and A.~Vespignani.
\newblock Dynamical and correlation properties of the internet.
\newblock {\em Phys. Rev. Lett.}, 87:258701, 2001.

\bibitem{pastor2001}
R.~Pastor-Satorras and A.~Vespignani.
\newblock Epidemic spreading in scale-free networks.
\newblock {\em Phys. Rev. Lett.}, 86(14):3200, 2001.

\bibitem{penrose2003}
M.~Penrose.
\newblock {\em Random Geometric Graphs}.
\newblock Oxford University Press, may 2003.

\bibitem{popescu2005semidefinite}
I.~Popescu.
\newblock A semidefinite programming approach to optimal-moment bounds for
  convex classes of distributions.
\newblock {\em Mathematics of Operations Research}, 30(3):632--657, 2005.

\bibitem{ramasubban1958}
T.~A. Ramasubban.
\newblock The mean difference and the mean deviation of some discontinuous
  distributions.
\newblock {\em Biometrika}, 45(3/4):549, dec 1958.

\bibitem{ravasz2003}
E.~Ravasz and A.-L. Barab\'asi.
\newblock Hierarchical organization in complex networks.
\newblock {\em Phys. Rev. E}, 67:026112, 2003.

\bibitem{rogosinski1958moments}
W.~W. Rogosinski.
\newblock Moments of non-negative mass.
\newblock {\em Proceedings of the Royal Society of London. Series A.
  Mathematical and Physical Sciences}, 245(1240):1--27, 1958.

\bibitem{shapiro2009lectures}
A.~Shapiro, D.~Dentcheva, and A.~Ruszczy{\'n}ski.
\newblock {\em Lectures on Stochastic Programming: Modeling and Theory}.
\newblock SIAM, 2009.

\bibitem{stegehuis2017b}
C.~Stegehuis.
\newblock Degree correlations in scale-free random graph models.
\newblock {\em Journal of Applied Probability}, 56(3):672–700, 2019.

\bibitem{stegehuis2019b}
C.~Stegehuis, R.~van~der Hofstad, and J.~S.~H. van Leeuwaarden.
\newblock Variational principle for scale-free network motifs.
\newblock {\em Scientific Reports}, 9(1):6762, 2019.

\bibitem{stegehuis2017}
C.~Stegehuis, R.~van~der Hofstad, J.~S.~H. van Leeuwaarden, and A.~J. E.~M.
  Janssen.
\newblock Clustering spectrum of scale-free networks.
\newblock {\em Phys. Rev. E}, 96(4):042309, 2017.

\bibitem{leeuwaarden2021}
J.~S.~H. van Leeuwaarden and C.~Stegehuis.
\newblock Robust subgraph counting with distribution-free random graph
  analysis.
\newblock {\em Physical Review E}, 104(4):044313, oct 2021.

\bibitem{vazquez2002}
A.~V{\'a}zquez, R.~Pastor-Satorras, and A.~Vespignani.
\newblock Large-scale topological and dynamical properties of the internet.
\newblock {\em Phys. Rev. E}, 65:066130, 2002.

\bibitem{voitalov2019scale}
I.~Voitalov, P.~van~der Hoorn, R.~van~der Hofstad, and D.~Krioukov.
\newblock Scale-free networks well done.
\newblock {\em Physical Review Research}, 1(3):033034, 2019.

\bibitem{yao2017}
D.~Yao, P.~van~der Hoorn, and N.~Litvak.
\newblock Average nearest neighbor degrees in scale-free networks.
\newblock {\em Internet Mathematics}, 2018.

\end{thebibliography}

\appendix
\section{Proofs}

\subsection{Proof of the variance-based maximizer for $c(k)$ and the number of cliques}\label{sec:ckvarbasedproof}
\begin{proof}[Proof of Theorem~\ref{thm:ckvariance}]
The optimization problem~\eqref{eq:varoptdualch} is equivalent to 
\begin{equation}\label{eq:varoptdualchtemp}
\begin{aligned}
&\max_{\pprob(x)\geq0} &  &\int_{x_1}\int_{x_2} g(x_1,x_2){\rm d} \pprob(x_2){\rm d} \pprob(x_1)\\
&\text{s.t.} &      & \int_{x_1}\int_{x_2} x_1^2x_2^2{\rm d}\pprob(x_2){\rm d}\pprob(x_1)=(\mu^2+\sigma^2)^2\\
& & & \int_{x_1}\int_{x_2} x_1x_2{\rm d}\pprob(x_2){\rm d}\pprob(x_1)=\mu^2,\\
&  & & \int_{x_1}\int_{x_2} {\rm d}\pprob(x_2){\rm d}\pprob(x_1)=1.   
\end{aligned}
\end{equation}
We now consider a relaxed version of this optimization problem. Instead of drawing from a single measure $\pprob$ for both $x_1$ and $x_2$, we allow for a dual measure $\pprob(x_1,x_2)$, where we only require the product of the means and second moments to be equal to $\mu^2$ and $(\mu^2+\sigma^2)^2$, respectively. We thus drop the i.i.d. assumption for now. This gives the problem
\begin{equation}\label{eq:varoptrelaxed}
\begin{aligned}
&\max_{\pprob(x_1,x_2)\geq0} &  &\int_{x_1}\int_{x_2} g(x_1,x_2){\rm d} \pprob(x_1,x_2)\\
&\text{s.t.} &      & \int_{x_1}\int_{x_2} x_1^2x_2^2{\rm d}\pprob(x_1,x_2)=(\mu^2+\sigma^2)^2 \\
&  &  & \int_{x_1}\int_{x_2} x_1x_2{\rm d}\pprob(x_1,x_2)=\mu^2,\\
&  & & \int_{x_1}\int_{x_2} {\rm d}\pprob(x_1,x_2)=1.   
\end{aligned}
\end{equation}
The dual problem then becomes
\begin{equation}\label{eq:varoptch}
\begin{aligned}
&\min_{\lambda_1,\lambda_2,\lambda_3} &  &\lambda_1(\mu^2+\sigma^2)^2+\lambda_2\mu^2+\lambda_3\\
&\text{s.t.} &      & g(x_1,x_2)-\lambda_1x_1^2x_2^2-\lambda_2x_1x_2-\lambda_3\leq 0\nonumber\\
&& &  \forall x_1,x_2\in[a,h_c],
\end{aligned}
\end{equation}
with 
\begin{equation}
g(x_1,x_2)=\min\Big(\frac{x_1x_2}{\mu n},1\Big)\min\Big(\frac{x_1h}{\mu n},1\Big)\min\Big(\frac{x_2h}{\mu n},1\Big).\end{equation} 

We will now solve the optimization problem by constructing a primal and dual solution that achieve the same objective value, and therefore optimize~\eqref{eq:varoptrelaxed}. Furthermore, the constructed optimal probability distribution turns out to be of product form, so that they must also be optimizers of the original, more constrained optimization problem~\eqref{eq:varoptdualchtemp}. These primal and dual solutions depend on $h, n, \mu$ and $\sigma$, in the following cases:

\textit{Case 1: $h\leq\sqrt{\mu n}$.} We take the dual solution $\lambda_1=h^2/(\mu n)^3$, $\lambda_2=\lambda_3=0$. This gives as objective value $(\mu^2+\sigma^2)^2h^2/(\mu n)^3$.

When $\sigma^2\leq (\sqrt{\mu n}-\mu)\mu$, for the primal problem, consider the 3-point distribution
\begin{align}
    p_0& =\frac{\sigma^2}{\sqrt{\mu n}\mu},\quad p_\mu=1-\frac{\sigma^2}{\mu(\sqrt{\mu n} - \mu)},\nonumber\\
    p_{\sqrt{\mu n}}&=\frac{\sigma^2}{\sqrt{\mu n}(\sqrt{\mu n}-\mu)}.
\end{align}
This is  a proper distribution by the condition on $\sigma^2$.
Then,
\begin{align}
    &\Exp{g(X_1,X_2)}\nonumber\\
    &=\frac{h^2}{(\mu n)^3}\Big(p_\mu^2\mu^4+2p_\mu p_{\sqrt{\mu n}}\mu^2(\mu n)+p_{\sqrt{\mu n}}^2(\mu n)^2\Big)\nonumber\\
    & = \frac{h^2}{(\mu n)^3}\Big(p_\mu\mu^2+p_{\sqrt{\mu n}}\mu n\Big)^2 = \frac{h^2}{(\mu n)^3}(\mu^2+\sigma^2)^2.
\end{align}
Thus, by strong duality, this is the optimizer of~\eqref{eq:varoptdualch}. 

When $\sigma^2 > (\sqrt{\mu n}-\mu)\mu$, consider the three-point distribution
\begin{align}\label{eq:primal-dist}
    p_0& =1-p_{\sqrt{\mu n}}-p_{\mu n/h},\nonumber\\
    p_{\sqrt{\mu n}}&= \frac{\mu ^2+\sigma ^2- \mu n/h \cdot \mu}{\sqrt{\mu n} (\sqrt{\mu n} - \mu n / h)},\nonumber\\
    p_{\mu n/h}& =\frac{\mu ^2+\sigma ^2- \sqrt{\mu n} \cdot \mu}{\mu n /h ( \mu n / h - \sqrt{\mu n})}.
\end{align}
This is only a proper distribution when $\sigma^2<(\mu n /h - \mu)\mu$.
This three-point distribution gives $\Exp{c(h)}=1$, so that it achieves the maximum $c(h)$. Therefore, we can immediately conclude that this primal solution is optimal.

\textit{Case 2: $\sqrt{\mu n}<h<\mu^2n/(\mu^2+\sigma^2)$.}
We take as dual solution again $\lambda_1=h^2/(\mu n)^3$, $\lambda_2=\lambda_3=0$. This gives as objective value $(\mu^2+\sigma^2)^2h^2/(\mu n)^3$.

For the primal problem, consider the 3-point distribution
\begin{align}
    p_0& =\frac{\sigma^2}{\mu(\mu n/h)}, \quad p_\mu =1-\frac{\sigma^2}{\mu(\mu n/h - \mu)},\nonumber\\
    p_{\mu n/h}&=\frac{\sigma^2}{\mu n/h(\mu n/h-\mu)}.
\end{align}
Again, this is only a distribution when $\sigma^2\leq\mu(\mu n/h-\mu)$, which is ensured by the condition on $h$. 
Then,
\begin{align}
    &\Exp{g(X_1,X_2)}\nonumber\\
    &=\frac{h^2}{(\mu n)^3}\Big(p_\mu^2\mu^4+2p_\mu p_{\mu n/h}\mu^2(\mu n)^2/h^2+p_{\mu n/h}^2(\mu n/h)^4\Big)\nonumber\\
    & = \frac{h^2}{(\mu n)^3}\Big(p_\mu\mu^2+p_{\mu n/h}(\mu n/h)^2\Big)^2 = \frac{h^2}{(\mu n)^3}(\mu^2+\sigma^2)^2.
\end{align}
Thus, the primal solution achieves the same value as the dual solution. Therefore, by strong duality, this is the optimizer of~\eqref{eq:varoptdualch}.

\textit{Case 3:  $h\geq \mu^2n/(\mu^2+\sigma^2)$.} We take as dual solution $\lambda_2=1/(\mu n)$, $\lambda_1=\lambda_3=0$. This gives as objective value $\mu/n$.

For the primal problem, consider again the 3-point distribution that is given in (\ref{eq:primal-dist}).
This is only a proper distribution when $\sigma^2>(\mu n /h - \mu)\mu$, which is satisfied by our condition on $\sigma^2$, and $\sigma^2\leq (\sqrt{\mu n}-\mu)\mu$. Under this three-point distribution, $\mathbb{E}[X]=\mu$, $\mathbb{E}[(X-\mu)^2]=\sigma^2$ and $\mathbb{E}[g(X_1,X_2)]=\mu/n$. Thus, by strong duality, this is the optimal solution.
\end{proof}

\begin{proof}[Proof of Theorem~\ref{thm:Kkvar}]
The relaxed optimization problem corresponding to~\eqref{eq:varoptdualclique} gives the dual problem
\begin{equation}\label{eq:varoptclique}
\begin{aligned}
&\min_{\lambda_1,\lambda_2,\lambda_3}  & & \lambda_1(\mu^2+\sigma^2)^k+\lambda_2\mu^k+\lambda_3\\
&\text{s.t.~}    & &  g(x_1,\dots,x_k)-\lambda_1x_1^2\cdots x_k^2-\lambda_2x_1x_2\cdots x_k-\lambda_3\leq 0 \nonumber\\
& & &\quad \forall x_1,\dots,x_k\in[a,h_c],
\end{aligned}
\end{equation}
with 
\begin{equation}
g(x_1,\dots,x_k)=\prod_{1\leq i<j\leq k}\min\Big(\frac{x_ix_j}{\mu n},1\Big).\end{equation} 
Consider the dual solution $\lambda_1=1/(\mu n)^k$, giving as objective value $(\mu^2+\sigma^2)^k/(\mu n)^k$. 

Consider the 3-point distribution 
\begin{align}
    p_0&=1-p_m-p_{\sqrt{\mu n}} ,\quad p_m=\frac{\mu \sqrt{\mu n}-\mu^2-\sigma^2}{m(\sqrt{\mu n}-m)},\nonumber\\ p_{\sqrt{\mu n}}&=\frac{\mu^2+\sigma^2- m \mu }{\sqrt{\mu n}(\sqrt{\mu n}-m)},
\end{align}
with $m=\mu^{(1+k)/4}n^{(3-k)/4}$. 
Note that $m<\sqrt{\mu n}$ for $k>3$, and $m=\mu$ for $k=3$, and that this is only a proper distribution when $\sigma^2\leq \mu(\sqrt{\mu n}-\mu)$. Now 
\begin{align}
    &\Exp{g(X_1,\dots,X_k)}=\frac{\Exp{X_1^{k-1}}^k}{(\mu n)^{k(k-1)/2}}. 
\end{align}
Furthermore,
\begin{align}
    \Exp{X_1^{k-1}}&=\frac{\left(\mu ^{\frac{k+1}{4}} n^{\frac{3-k}{4}}\right)^{k-2} \left(\mu \sqrt{\mu  n}-\mu^2-\sigma ^2\right)}{\sqrt{\mu  n}-\mu ^{\frac{k+1}{4}} n^{\frac{3-k}{4}}}\nonumber\\ 
    & + \frac{\sqrt{\mu  n}^{k-2} \left(\mu ^2+\sigma ^2-\mu ^{\frac{k+5}{4}} n^{\frac{3}{4}-\frac{k}{4}}\right)}{\sqrt{\mu  n}-\mu ^{\frac{k+1}{4}} n^{\frac{3-k}{4}}}.
\end{align}
Now when $k> 3$, then $n^{(3-k)/4}=o(1)$. Therefore, for $k>3$, 
\begin{align}
    \Exp{X_1^{k-1}}&= \frac{\sqrt{\mu  n}^{k-2} \left(\mu ^2+\sigma ^2\right)}{\sqrt{\mu  n}}(1+o(1)).
\end{align}
Thus, also
\begin{align}
    &\Exp{g(X_1,\dots,X_k)}
    = \frac{\sqrt{\mu n}^{(k-3)k}(\mu^2+\sigma^2)^k}{(\mu n)^{k(k-1)/2}}(1+o(1))\nonumber\\
    & = \frac{(\mu^2+\sigma^2)^k}{(\mu n)^{k}}(1+o(1)),
\end{align}
making the 3-point distribution asymptotically optimal, as it asymptotically achieves the same value as the dual solution.
Furthermore, for $k=3,$ $\Exp{X^2}=\mu^2+\sigma^2$, by the conditions in $\mathcal{P}(\mu,\sigma^2)$. Thus, for $k=3$,
\begin{align}
    &\Exp{g(X_1,\dots,X_k)}=\frac{(\mu^2+\sigma^2)^3}{(\mu n)^{3}},
\end{align}
which is the exact same value as the dual objective value.
Hence, by strong duality, this is the optimal solution.
\end{proof}

\subsection{Proofs for MAD-based maximizers of $a(k)$ and $c(k)$}\label{sec:proofsMAD}
\begin{proof}[Proof of Theorem~\ref{thm:annd}]

The function $h'\min(\frac{h h'}{\mu n},1)$ is piecewise convex in $h'$. In particular, it is quadratic up to $l=\min(\mu n/h, h_c)\gg \mu$, where it has slope 2, and it is linear with slope 1 after that. 

Thus, to optimize over the distribution of $h'$, we need to solve
\begin{align}\label{eq:primal}
   &\max_{\pprob\in\mathcal{P}(\mu,d)}\int_x f(x) {\rm d}\mathbb{P}(x)\nonumber\\
   &\text{s.t. }  \ \int_x |x-\mu|{\rm d}\mathbb{P}(x)=d, \int_x x{\rm d}\mathbb{P}(x)=\mu, \int_x {\rm d}\mathbb{P}(x)=1,
\end{align}
where $f(x)=x\min(\frac{h x}{\mu n},1)$.
Similarly to the derivation of~\cite[Eq. (6)]{eekelen2019}, this results in the dual problem
\begin{align}\label{eq:dual}
    \min_{\lambda_1,\lambda_2,\lambda_3}&\lambda_1d+\lambda_2\mu +\lambda_3\nonumber\\
\text{s.t. }& f(x)-\lambda_1|x-\mu|-\lambda_2 x-\lambda_3\leq 0 \quad \forall x\in[0,h_s].
\end{align}
For simplicity of notation, we assume that $a=0$. 
Thus, this dual problem aims to find the tightest piecewise linear majorant of $f(x)$ with a kink at $\mu$ that minimizes the objective value. 
Consider the majorant $F_1(x)=\frac{h l}{2\mu n}|x-\mu|+(\frac{hl}{2\mu n}+\frac{h}{n})x-\frac{h l}{2n}$.
Now $F_1(x)$ has as its objective value 
\begin{align}&\frac{h l}{2\mu n}d+(\frac{hl}{2\mu n}+\frac{h}{n})\mu-\frac{h l}{2n} 
=\frac{\mu h}{n}+\frac{hdl}{2\mu n}.\end{align}

By weak duality of semi-infinite linear programming, we know that a feasible solution to the dual problem provides us with a valid upper bound for the optimal primal solution value. Thus, we now find a feasible primal solution with an objective value equal to this upper bound results to achieve strong duality.
As the tightest majorant of the dual problem touches $f(x)$ at the points $a,\mu$ and $l=\min(\mu n/h, h_c)$, we consider the three-point distribution 
\begin{align}\label{eq:3pointplow}
	p_0& =\frac{d}{2\mu},\quad p_\mu=1-\frac{d}{2\mu}-\frac{d}{2(l-\mu)},\nonumber\\
	\quad p_{l}&=\frac{d}{2(l-\mu)},
\end{align}
which is a distribution since $\mu n/h\gg \mu$.
This yields as objective value for the primal problem
\begin{align}\label{eq:primalboundak}
    &\frac{\mu^2h}{\mu n}\Big(1-\frac{d}{2\mu}-\frac{d}{2(l-\mu)}\Big) +\frac{l^2h}{\mu n}\frac{d}{2(l-\mu)}=\frac{\mu h}{n}+\frac{hdl}{2\mu n}.
\end{align}

Thus, we have strong duality as the primal objective from~\eqref{eq:primalboundak} and the dual optimal value are the same. 

Therefore,
\begin{align}
    &\Expp{a(h)}  = \frac{n}{h} \Big(\frac{\mu h}{n}+\frac{hdl}{2\mu n}\Big) = \mu + \frac{dl}{2\mu}.
\end{align}
Now when $h_c$ tends to infinity the second term dominates as $l=\min(h_c,\mu n/h)\gg1$ when $h\ll n$ and,
\begin{align}
    \Expp{a(h)}     & = \frac{d}{2\mu}\min(h_c,\frac{\mu n}{h})(1+o(1)).
\end{align}
\end{proof}

\begin{proof}[Proof of Theorem~\ref{thm:ckhigh}]
\textit{Case 1: $h\ll\sqrt{\mu n}$ and $2\mu(1-\sqrt{\mu/n})<d<2\mu(1-h/n)$.} In this case, consider the three-point distribution
\begin{align}
    p_0&=\frac{d}{2\mu},\quad 
    p_{\sqrt{\mu n}}= \frac{(d-2 \mu) l+2 \mu ^2}{2 \mu  \left(\sqrt{\mu  n} - l\right)},\nonumber\\
    p_{l}&=\frac{(d -2\mu)\sqrt{\mu  n}+2 \mu ^2}{2 \mu  \left(l-\sqrt{\mu  n}\right)},
\end{align}
for $l= \mu n/h$,
which is a proper distribution under the condition $2\mu(1-\sqrt{\mu/n})<d<2\mu(1-h/n)$. 
For this three-point distribution,
\begin{equation}
    \Expp{c(h)}=\frac{n^2}{h^2}\Big(p_{\sqrt{\mu n}}^2\frac{\mu n h^2 }{(\mu n)^2}+2p_lp_{\sqrt{\mu n}}\frac{\sqrt{\mu n} h}{\mu n}+p_{l}^2\Big)=1.
\end{equation}
As $c(h)$ is a probability, we have $\Exp{c(h)}\leq 1$, so that it coincides with the upper bound.

\textit{Case 2: $h\gg\sqrt{\mu n}$.} We now apply the three-point optimization problem in two steps: first for the optimal distribution of $h_i$, then for $h_j$. We will show that these two optimal distributions are identical and independent, so that this method shows that optimizing the distribution of $h_i$ and $h_j$ while constraining them to be equal yields this same optimal distribution. 

The function we would like to optimize is
\begin{equation}\label{eq:opteqck}
    \Exp{\min(\frac{hh_1}{\mu n},1)\min(\frac{hh_2}{\mu n},1)\min(\frac{h_1h_2}{\mu n},1)}.
\end{equation}

\textit{Step 1: optimizing over $h_1$.}
If we optimize only over the distribution of $h_1$ and fix $h_2$, this is equivalent to optimizing 
\begin{equation}\label{eq:maxch1}
    \Exp{\min(\frac{hh_1}{\mu n},1)\min(\frac{h_1h_2}{\mu n},1)}.
\end{equation}
Thus, we again want to maximize~\eqref{eq:primal} and therefore minimize its dual problem~\eqref{eq:dual}, but now with $f(x)=\min(\frac{hx}{\mu n},1)\min(\frac{xh_2}{\mu n},1)$. We again focus on the dual problem, which is again equivalent to minimizing $\lambda_1 d+\lambda_2\mu +\lambda_3$ over all tightest majorants of $f(x)$. 
Now $f(x)$ is again piecewise convex: it is quadratic in $h_1$ for $h_1<\min(\mu n/h,\mu n/h_2)$, linear for $h_1\in[\min(\mu n/h,\mu n/h_2),\max(\mu n/h,\mu n/h_2)]$, and constant for $h_1\in[\max(\mu n/h,\mu n/h_2),h_c]$. Thus, $f(x)$ is shaped as the function in Figure~\ref{fig:majorantck1} (where for simplicity $a=0$). 

\begin{figure}[tb]
    \centering
    \includegraphics[width=0.45\textwidth]{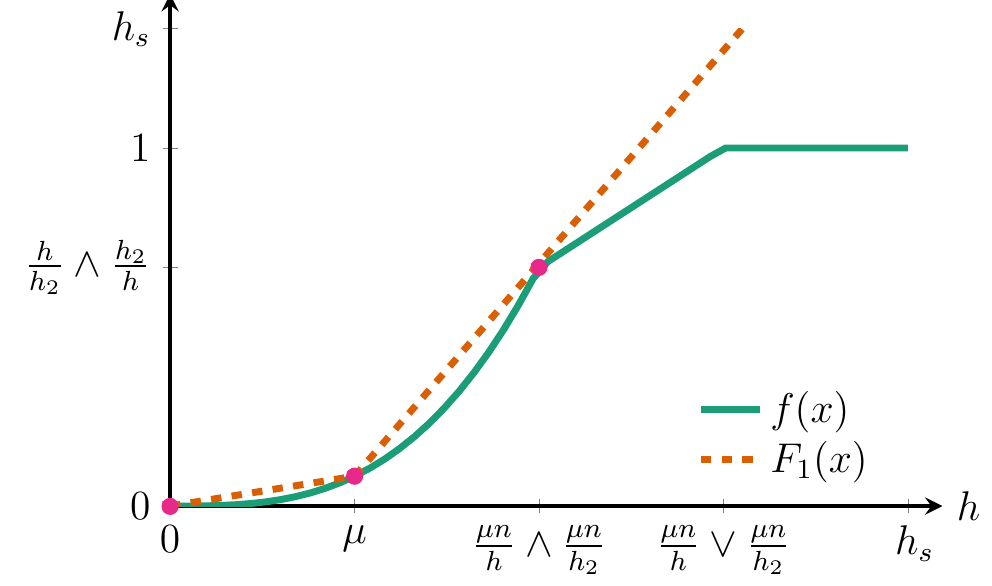}
    \caption{The tightest majorant for $f(x)$ describing the optimizer over $h_1$ of $c(h)$ }
    \label{fig:majorantck1}
\end{figure}

In the next computations, we assume for simplicity of notation that $a=0$. 
Now the tightest majorant of Figure~\ref{fig:majorantck1}, $F_1(x)$, can be parametrized by $\lambda_1=\min(h,h_2)/(2\mu n)$, $\lambda_2=\min(h,h_2)/(2\mu n)+hh_2/(\mu n^2)$, $\lambda_3=-\min(h,h_2)/(2n)$. This gives as objective value for the dual program~\eqref{eq:dual}
\begin{align}\label{eq:chdualub}
& d\frac{\min(h,h_2)}{2\mu n}+\mu\Big(\frac{\min(h,h_2)}{2\mu n}+\frac{hh_2}{\mu n^2}\Big)-\frac{\min(h,h_2)}{2n}\nonumber\\
& =  \frac{d}{2\mu n}\min(h,h_2)+\frac{hh_2}{n^2}.
\end{align}

We now again consider the primal problem~\eqref{eq:primal}. The solution to the dual problem~\eqref{eq:dual}, $F_1(x)$ has three touching points of $f(x)$: at $0$, $\mu$ and $\min(\mu n/h_2,\mu n/h)$. Thus, we will now show that $c(h)$ is maximized over $h_1$ by the three-point distribution 
\begin{align}\label{eq:3pointc1}
	&p_0 =\frac{d}{2\mu}, \quad p_\mu=1-\frac{d}{2\mu}-\frac{d}{2(\min(\mu n/h_2,\mu n/h)-\mu)},\nonumber\\
	&p_{\min(\mu n/h_2,\mu n/h)}=\frac{d}{2(\min(\mu n/h_2,\mu n/h)-\mu)}.
\end{align}
This gives an objective value of~\eqref{eq:primal} of 
\begin{align}
 &0\cdot \frac{d}{2\mu} + \frac{\mu^2hh_2}{(\mu n)^2}\Big(1-\frac{d}{2\mu}-\frac{d}{2(\min(\mu n/h_2,\mu n/h)-\mu)}\Big)\nonumber\\
 & +\frac{\min(\mu n/h_2,\mu n/h)^2hh_2}{(\mu n)^2}\cdot\frac{d}{2(\min(\mu n/h_2,\mu n/h)-\mu)}\nonumber\\
 & =\frac{d}{2\mu n}\min(h,h_2)+\frac{hh_2}{n^2}.
\end{align}
Thus, the objective value of the three-point distribution is equal to the objective value of the dual problem in~\eqref{eq:chdualub}. Thus, by strong duality,~\eqref{eq:3pointc1} is the optimal distribution maximizing~\eqref{eq:maxch1}.

\textit{Step 2: optimizing over $h_2$.}
We now plug the optimal three-point distribution of $h_1$~\eqref{eq:3pointc1}  into~\eqref{eq:opteqck} and then optimize only over the distribution of $h_2$. We then need to optimize
\begin{widetext}
\begin{align}\label{eq:exph2}
    &\Exp{(1-\frac{d}{2\mu}-\frac{d}{2(\min(\mu n/h_2,\mu n/h)-\mu)})\frac{h h_2}{ n^2}\min(\frac{hh_2}{\mu n},1)} + \Exp{\frac{d}{2 (\min( \mu n/h_2, \mu n/h)-\mu)}\min(\frac{h}{h_2},\frac{h_2}{h})\min(\frac{hh_2}{\mu n},1)}.
\end{align}
\end{widetext}

This is a function in $h_2$ that looks like Figure~\ref{fig:majorantck2}: first a convex part, then a linear part, and then again a convex part. 
The tightest majorant $F_1(x)$ is depicted in Figure~\ref{fig:majorantck2} as well.
\begin{figure}[tb]
    \centering
    \includegraphics[width=0.4\textwidth]{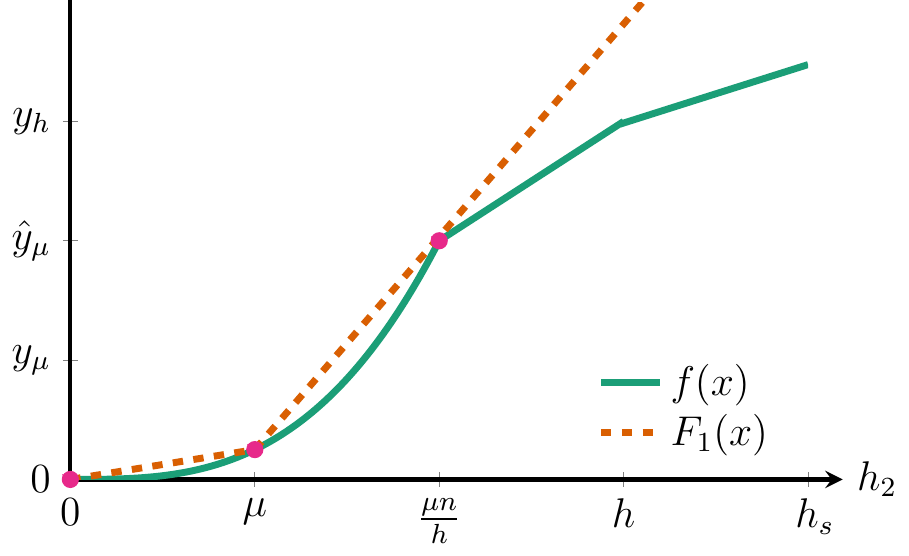}
    \caption{The tightest majorant for $f(x)$ describing the optimizer over $h_2$ of $c(h)$}
    \label{fig:majorantck2}
\end{figure}
$F_1(x)$ is characterized by $\lambda_1=\tilde{C}h/(2(n\mu)^2) $, $\lambda_2=\tilde{C}(h^2/(\mu^2n^3)-h/(2(n\mu)^2))$ and $\lambda_3=-\tilde{C}h/(2n^2\mu)$, with $\tilde{C}=\mu^2+dn\mu/(2h)$. This gives as dual objective  
\begin{equation}
    \lambda_1d+\lambda_2\mu+\lambda_3=\frac{(d n+2 h \mu )^2}{4 \mu  n^3}.
\end{equation}
We now consider the 3-point distribution for $h_2$ on the touching points $0, 
\mu$ and $\mu n/h$:  
\begin{align}\label{eq:3pointc2}
	p_0& =\frac{d}{2\mu},\quad p_\mu=1-\frac{d}{2\mu}-\frac{d}{2 (\mu n/h-\mu)},\nonumber\\
	\quad p_{\mu n/h}& =\frac{d}{2(\mu n/h-\mu)}.
\end{align}
This gives as objective value for the primal problem
\begin{align}
    &\Exp{\Big(1-\frac{d}{2\mu}-\frac{d}{2(\mu n/h-\mu)}\Big)\frac{h^2}{\mu n^3}h_2^2} + \Exp{\frac{d}{2 ( \mu n/h-\mu)}\frac{h_2^2}{\mu n}}\nonumber\\
    &=\frac{(d n+2 h \mu )^2}{4 \mu  n^3},
\end{align}
so that by strong duality, this is the optimal three-point distribution for $h_2$.
As $h\gg\sqrt{\mu n}$, this means that $h_2<h$ for all three values of the three-point distribution. Then, the three-point distribution for~\eqref{eq:3pointc1} reduces to~\eqref{eq:3pointc2}. 

Thus, by optimizing the distributions of $h_1$ and $h_2$ separately, we obtain the same three-point distribution for both. Therefore, the optimization of the distributions of $h_1$ and $h_2$ where they are constrained to have the same distribution also gives the three-point distribution~\eqref{eq:3pointc2} as its solution.

Indeed, 
\begin{equation}
    \max_{x,y:x=y}f(x,y)\leq \max_y \max_x f(x,y),
\end{equation}
as all combinations $f(x,x)$ are also encountered on the right-hand side.
Furthermore, let $x^*$ and $y^*$ denote the optimizers obtained in the right-hand side, and suppose that $x^*=y^*$. Then,
\begin{equation}
    \max_{x,y:x=y}f(x,y)\geq f(x^*,y^*)= \max_y \max_x f(x,y).
\end{equation}
Thus, when optimizing the distributions of $h_1$ and $h_2$ separately yields an optimizer where both distributions are equal, then this is also the optimization of the distributions of $h_1$ and $h_2$ where they are constrained to have the same distribution. 

This gives for $c(h)$
\begin{align}
     \max_{\pprob\in\mathcal{P}(\mu,d)}\Expp{c(h)} &= \frac{n^2}{h^2}\frac{(d n+2 h \mu )^2}{4 \mu  n^3} \nonumber\\
     &=\frac{d^2n}{4\mu h^2}(1+o(1)).
\end{align}

\textit{Case 3: $h\ll\sqrt{\mu n}$ and $d<2\mu(1-\mu/h)$.} We optimize $c(h)$ here by again first optimizing over the distribution of $h_i$ only, and then over the distribution of $h_j$. Therefore, up until~\eqref{eq:exph2} we follow the same steps for optimizing over $h_i$. We then optimize for the distribution of $h_j$. Consider the majorant of the dual problem $\tilde{F}_1(x)$ described by $\lambda_1=\frac{h^2 (d n+2 h \mu )}{4 \mu ^2 n^3}$, $\lambda_2=\frac{h (h+2 \mu ) (d n+2 h \mu )}{4 \mu ^2 n^3}$ and $\lambda_3=-\frac{h^2 (d n+2 h \mu )}{4 \mu  n^3}$, giving as dual objective value
\begin{equation}\label{eq:dualoptcase3}
    \lambda_1d+\lambda_2\mu+\lambda_3=\frac{h \left(d h+2 \mu ^2\right) (d n+2 h \mu )}{4 \mu ^2 n^3}.
\end{equation}
For the primal problem, consider the three-point distribution for $h_2$ of
\begin{align}
    p_0&=\frac{d}{2\mu},\quad p_\mu=1-\frac{d}{2\mu}-\frac{d}{2(h-\mu)},\nonumber\\
    \quad p_{h}&=\frac{d}{2 (h-\mu )},
\end{align}
which is a proper distribution as long as $d<2\mu(1-\mu/h)$.
Plugging this into~\eqref{eq:exph2} for the distribution of $h_j$ gives
\begin{align}
    &\Exp{\Big(1-\frac{d}{2\mu}-\frac{d}{2(\mu n/h-\mu)}\Big) \frac{h^2}{\mu n^3}h_2^2} + \Exp{\frac{d}{2( \mu n/h-\mu)}\frac{h_2^2}{\mu n}}\nonumber\\
    &=\frac{h \left(d h+2 \mu ^2\right) (d n+2 h \mu )}{4 \mu ^2 n^3},
\end{align}
so that by strong duality, this is the optimal distribution for $h_j$. This yields for $c(h)$ that
\begin{align}\label{eq:ckcor}
    \max_{\pprob_1,\pprob_2\in\mathcal{P}(\mu,d)}\mathbb{E}_{\pprob_1,\pprob_2}[c(h)]&=\frac{n^2}{h^2}\frac{h \left(d h+2 \mu ^2\right) (d n+2 h \mu )}{4 \mu ^2 n^3}\nonumber\\
    & =\frac{ d^2}{4 \mu ^2 }(1+o(1)).
\end{align}

However, the maximal value of $c(h)$ is now attained by two different distributions for $h_i$ and $h_j$, while our objective was to maximize $c(h)$ with i.i.d. distribution for $h_i$ and $h_j$. We therefore now consider the uncorrelated three-point distribution $\pprob_3$ for $h_i$ and $h_j$ of
\begin{align}
    p_0& =\frac{d}{2\mu},\quad p_\mu=1-\frac{d}{2\mu}-\frac{d}{2(\sqrt{\mu n}-\mu)},\nonumber\\
    p_{\sqrt{\mu n}}& =\frac{d}{2 (\sqrt{\mu n}-\mu )},
\end{align}
which is a proper distribution as long as $d<2\mu(1-\sqrt{\mu/n})$.
This yields as expected value for $c(h)$ of
\begin{align}\label{eq:primalcase3}
    \mathbb{E}_{\pprob_3}[c(h)]& =\frac{n^2}{h^2}\frac{h^2}{(\mu n)^3}\mathbb{E}_{\pprob_3}[h^2]^2\nonumber\\
    & = \frac{1}{\mu^3 n}\Big(\frac{1}{2} d \sqrt{\mu  n}+\mu ^2\Big)^2\nonumber\\
    & = \frac{d^2}{4\mu^2}(1+o(1)),
\end{align}
which is the same leading order term as in~\eqref{eq:ckcor}, where correlations between $h_i$ and $h_j$ are allowed. Since
\begin{equation}
    \max_{\pprob_1,\pprob_2\in\mathcal{P}(\mu,d)}\mathbb{E}_{\pprob_1,\pprob_2}[c(h)]\geq \mathbb{E}_{\pprob_3}[c(h)],
\end{equation}
$\pprob_3$ is asymptotically optimal. Furthermore, an increase in $d$ does not affect the set of feasible dual solutions for the unconstrained problem. Thus,~\eqref{eq:dualoptcase3} is still an upper bound of the maximal $c(h)$. As by~\eqref{eq:ckcor} this dual value is asymptotically equal to $d^2/(4\mu^2)$, and~\eqref{eq:primalcase3} achieves the same value asymptotically, this must imply that $\pprob_3$ is asymptotically optimal when it is a proper probability distribution, thus for all $d<2\mu(1-\sqrt{\mu/n})$.
\end{proof}

\end{document}